\documentclass[lettersize,journal]{IEEEtran}
\usepackage{amsmath,amsfonts}
\usepackage{array}
\usepackage{textcomp}
\usepackage{stfloats}
\usepackage{url}
\usepackage{verbatim}
\usepackage{graphicx}
\usepackage{cite}
\usepackage{subcaption}
\usepackage{multirow}
\usepackage{amsthm}
\usepackage[font={footnotesize}]{subcaption}  
\usepackage{xcolor}
\usepackage{soul}
\usepackage{mathrsfs}
\usepackage{thmtools} 
\declaretheoremstyle[%
spaceabove=0pt,%
spacebelow=4pt,%
headfont=\normalfont\itshape,%
postheadspace=1em,%
qed=\qedsymbol%
]{proofstyle} 
\usepackage{amssymb}
\usepackage{siunitx}

\declaretheorem[name={Proof},style=proofstyle,unnumbered]{prf}

\newtheorem{thm}{Theorem}
\newtheorem{lem}{Lemma}

\newtheorem{asm}{Assumption}

\newtheorem{problem}{Problem}

\usepackage[ruled, vlined, linesnumbered]{algorithm2e}
\makeatletter
\newcommand{\nosemic}{\renewcommand{\@endalgocfline}{\relax}}
\newcommand{\dosemic}{\renewcommand{\@endalgocfline}{\algocf@endline}}
\let\oldnl\nl
\newcommand{\nonl}{\renewcommand{\nl}{\let\nl\oldnl}}
\makeatother
\SetKwBlock{RepeatNoEnd}{repeat}{}

\begin{document}

\title{Adaptive Heterogeneous Compression for Resource-Efficient Federated Knowledge Distillation}
\author{Chenwang Liu, Yijun Liu, Chang Liu, Xu Zhang, and Pengchao Han
\IEEEcompsocitemizethanks{\IEEEcompsocthanksitem Chenwang Liu, Yijun Liu, Chang Liu, and Pengchao Han (corresponding author) are with the School of Information Engineering, Guangdong University of Technology, Guangzhou 510006, China (email: liuchenwang@mails.gdut.edu.cn, 3124002274@mails.gdut.edu.cn, liuchang@gdut.edu.cn, hanpengchao@gdut.edu.cn).\\
Xu Zhang is with the School of Communication and Information Engineering, Chongqing University of Posts and Telecommunications, Chongqing 400065, China (email: zhangxu@cqupt.edu.cn).
}%
}



\maketitle

\begin{abstract}

Federated learning (FL) enables privacy-preserving distributed model training but faces challenges from heterogeneous model architectures and limited communication resources at the network edge. Federated knowledge distillation (FedKD) alleviates model heterogeneity by combining prototype-wise parameter aggregation and knowledge transfer across heterogeneous models. However, transmitting gradients still introduces considerable communication overhead, while existing compression approaches typically apply a uniform strategy across clients and ignore their diverse model characteristics and resource capacities.
To address this issue, we propose a heterogeneous compression framework for FedKD that enables each client to select a compression strategy from a candidate strategy set. We formulate the compression strategy selection problem as a non-stationary stochastic multi-armed bandit (MAB), where each arm corresponds to a compression strategy. An efficiency-aware reward is designed by jointly considering local optimization improvement, global knowledge alignment, and execution time. Based on this formulation, we develop an Adaptive heterogeneouS Compression algorithm for fEderated kNowledge Distillation (ASCEND), which employs an exponential moving average (EMA)-enhanced $\epsilon$-greedy policy to balance exploration and exploitation.
Experimental results on multiple datasets demonstrate that ASCEND effectively adapts to heterogeneous model and resource settings, reducing communication overhead and training time while maintaining competitive model accuracy.

\end{abstract}

\begin{IEEEkeywords}
Federated Knowledge Distillation,
Heterogeneous,
Gradient Compression,
Multi-Armed Bandit
\end{IEEEkeywords}

\section{Introduction}\label{sec:introduction}

\IEEEPARstart{F}{ederated} learning (FL) has emerged as a promising distributed learning paradigm for privacy-preserving model training, enabling multiple clients to collaboratively learn a shared model~\cite{kairouz2021advances}. 
However, FL inherently suffers from the heterogeneity of client devices and network conditions. In practice, participating clients, such as mobile devices and Internet of Things (IoT) sensors, often exhibit diverse computational resources, memory availability, and communication bandwidths. These disparities make it necessary for clients to adopt heterogeneous model architectures tailored to their local system capabilities, which limits the applicability of traditional parameter-averaging-based FL algorithms that require identical model architectures~\cite{li2020federated}.

Federated knowledge distillation (FedKD) has emerged as a promising approach to address model architecture heterogeneity in FL. In typical FedKD frameworks~\cite{wu2022communication}, a prototype denotes a specific model architecture adopted by a group of clients. The server performs hybrid aggregation that incorporates parameter averaging for clients within the same prototype and logit-based knowledge distillation (KD) across different prototypes. However, repeatedly exchanging full gradients introduces significant communication overhead, especially for large-scale models and resource-constrained edge devices.

Gradient compression provides an effective solution to alleviate communication overhead in FedKD by transmitting only a subset of gradient entries. Representative methods include Top-$K$, Random-$K$, and Periodic-$K$~\cite{shi2019convergence, xu2022detached, stich2018local, lin2017deep}.
However, most existing works adopt a uniform compression strategy across clients and training rounds, neglecting system heterogeneity. First, clients with different computational capabilities require different optimal strategies. For instance, computationally intensive strategies such as Top-$K$ may benefit powerful devices but overload resource-constrained ones, while lightweight methods such as Random-$K$ may degrade performance. Second, model heterogeneity further affects strategy selection, as Top-$K$ incurs higher sorting cost for large models, whereas it may outperform Random-$K$ for small models due to better update quality.
These limitations raise a key question:

\textbf{Key Question}: How can compression strategies be selected for clients with diverse model architectures and resource capacities to achieve an effective balance between training efficiency and learning performance?

To answer this question, we propose a heterogeneous compression framework for FedKD, where each client adaptively selects a compression strategy from a candidate strategy set according to its model characteristics and resource capacities. Furthermore, we formulate the heterogeneous compression strategy selection as an optimization problem that aims to maximize efficiency-aware performance improvement. However, solving this problem is challenging. First, the performance gain of each compression strategy is revealed after execution, making the corresponding reward unavailable in advance. Second, the optimal compression strategy may change throughout training, where performance-oriented strategies may be preferred in early stages while efficiency-oriented strategies may become more desirable in later stages.

To overcome these challenges, we propose an Adaptive heterogeneouS Compression algorithm for fEderated kNowledge Distillation (ASCEND). Specifically, we formulate the client-specific compression strategy selection problem as a non-stationary stochastic multi-armed bandit (MAB) problem, where each arm corresponds to a candidate compression strategy. A reward function is designed to jointly capture model improvement and execution efficiency by considering local training loss, global knowledge alignment, and time cost. An exponential moving average (EMA)-enhanced $\epsilon$-greedy policy is adopted to balance exploration and exploitation during adaptive strategy selection.

The main contributions of this paper are as follows:
\begin{itemize}

\item \textbf{Heterogeneous Compression Framework for FedKD}. 
We propose a heterogeneous compression framework for FedKD that enables each client to independently select a customized compression strategy from a candidate compression set under heterogeneous model architectures and resource conditions.
We provide a convergence analysis under non-convex loss functions, yielding $O(1/\sqrt{T})$ convergence rates.

\item \textbf{Compression Strategy Optimization Problem}. 
We formulate a heterogeneous compression strategy optimization problem that maximizes cumulative efficiency-aware performance, explicitly capturing the tradeoff among compression cost, communication cost, and learning performance.

\item \textbf{ASCEND Algorithm with Theoretical Guarantee}. 
We model compression strategy selection as a non-stationary stochastic MAB problem and propose ASCEND for adaptive strategy selection under dynamic training conditions. Under a piecewise-stationary approximation, we derive a sublinear regret bound of $O(1/\sqrt{MT})$,
providing a theoretical guarantee for the long-term performance of ASCEND.

\item \textbf{Experimental Validation}. 
Extensive experiments on both a real-world platform and  simulation  settings demonstrate that ASCEND adapts effectively to diverse model and resource settings, significantly reducing communication overhead and training time while maintaining high accuracy.
\end{itemize}

The remainder of this paper is organized as follows. 
Section~II reviews related work. 
Section~III presents the system model and preliminaries, including FL, FedKD, and gradient compression with residual accumulation. 
Section~IV introduces the proposed heterogeneous compression framework and provides a convergence analysis. 
Section~V proposes the ASCEND algorithm and its analysis.
Section~VI presents the experimental evaluation of the ASCEND algorithm. 
Finally, Section~VII concludes the paper.

\section{Related Work} \label{sec:related-work}

In this section, we review the literature on FedKD and communication-efficient FL via gradient compression. 
\subsection{Federated Knowledge Distillation}

FedKD is an effective approach to address model heterogeneity in FL by exchanging knowledge representations, such as soft predictions and intermediate features~\cite{wu2022communication,hinton2015distilling,han2025rethinking,11372971}.
A recent survey~\cite{han2025rethinking} revisits KD in collaborative machine learning, highlighting the importance of effective knowledge exchange under heterogeneous learning tasks.

Existing FedKD methods mainly differ in their knowledge sharing mechanisms. FedMD~\cite{li2019fedmd} utilizes a shared public dataset to align prediction outputs among heterogeneous client models, while FedGen~\cite{zhu2021fedgen} introduces a generative model to synthesize proxy samples when public datasets are unavailable. 
Beyond prediction-level knowledge transfer, representation-based FedKD methods exploit intermediate feature information. For example, FedProto~\cite{tan2022fedproto} introduces class-level feature representations extracted from local models and exchanges these semantic representations among clients.
FedTGP~\cite{zhang2023fedtgp} constructs and optimizes global class-level feature representations on the server side.
Instead of directly aggregating local representations, FedTGP learns more discriminative global semantic representations to improve representation aggregation. 
Some studies further consider the personalization issue caused by diverse client characteristics. For example, 
KTpFL~\cite{zhang2021parameter} incorporates KD into personalized FL by transferring knowledge from the global model to client-specific models.

Although these methods effectively mitigate model heterogeneity, the limited amount of shared knowledge may restrict global performance. To improve knowledge aggregation capability, FedDF~\cite{lin2020ensemble} introduces a hybrid aggregation framework that combines parameter averaging among homogeneous models with KD across heterogeneous models. However, FedDF requires clients to upload gradients for ensemble aggregation, incurring substantial communication overhead in bandwidth-constrained federated environments.

\subsection{Communication-Efficient FL via Gradient Compression}
Communication overhead is one of the major challenges in FL, especially when clients need to frequently exchange high-dimensional gradients in bandwidth-constrained networks.
Gradient compression has been widely studied to reduce communication cost by transmitting compact representations of gradients while preserving learning performance~\cite{lin2017deep}. 

Top-$K$ compression is one of the most representative approaches, where only gradients with the largest magnitudes are uploaded. DGC~\cite{lin2017deep} further incorporates momentum correction, local gradient clipping, and residual accumulation to compensate for information loss caused by aggressive compression. These techniques improve communication efficiency while maintaining convergence performance.
To reduce the computational overhead of coordinate selection, Random-$K$ compression randomly selects gradient entries for transmission~\cite{xu2022detached}. Although it avoids the sorting operation required by Top-$K$, it may introduce additional variance during optimization. Periodic-$K$ compression~\cite{stich2018local} addresses this issue by periodically selecting different coordinates, ensuring that all gradients are updated within a predefined number of iterations.

Adaptive compression methods dynamically adjust compression configurations according to training dynamics or system conditions to achieve a better trade-off between communication efficiency and model performance.
AdaComp~\cite{chen2018adacomp} introduces adaptive residual gradient compression, where the compression rate is dynamically adjusted according to gradient residual activity to balance compression efficiency and model accuracy.
For a fixed compression method, Han et al.~\cite{han2020adaptive} propose an online learning method that adaptively adjusts the Top-$K$ compression ratio according to gradient statistics during training, reducing communication overhead while preserving convergence performance.

Although existing compression approaches significantly reduce communication overhead, most methods optimize a single compression mechanism, ignoring heterogeneous model architectures, data characteristics, and system resources.  They do not consider adaptive selection among multiple heterogeneous compression strategies. In FedKD scenarios, different compression strategies may provide different performance-efficiency trade-offs for different clients. Therefore, adaptive selection among multiple compression strategies remains an open problem, motivating our heterogeneous compression framework and online strategy selection approach.

\section{System Model and Preliminaries}
\label{sec:system_models_preliminaries}
In this section, we first present a brief overview of FL and FedKD. We then introduce gradient compression strategies with residual accumulation, which serve as the building blocks of our heterogeneous compression framework.

\subsection{Federated Learning}
An FL system consists of a central server and a client set $\mathcal{N}$ with total $N=|\mathcal{N}|$ clients. Each client $n\in \mathcal{N}$ possesses a local dataset $\mathcal{D}_n$ of size $|\mathcal{D}_n|$. 
Each client trains a local model parameterized by $\boldsymbol{\omega}$ to minimize an empirical risk function  over its private dataset, defined as
\begin{equation}\label{eq:local_loss}
f_n(\boldsymbol{\omega})
=
\frac{1}{|\mathcal{D}_n|}
\sum_{i=1}^{|\mathcal{D}_n|}
\ell_n(\boldsymbol{\omega}; x_i^n),
\end{equation}
where $\ell_n(\boldsymbol{\omega}; x_i^n)$ denotes the sample-wise loss incurred by the model parameters $\boldsymbol{\omega}$ on the $i$th local training sample $x_i^n \in \mathcal{D}_n$. 

In FL, all clients aim to collaboratively learn a global model parameterized by $\boldsymbol{\omega}$ by solving the following weighted optimization problem~\cite{mcmahan2017communication}:
\begin{equation}
\min_{\boldsymbol{\omega}} f(\boldsymbol{\omega}) = \sum_{n=1}^{N} \frac{|\mathcal{D}_n|}{\sum_{n=1}^{N}|\mathcal{D}_n|} f_n(\boldsymbol{\omega}).
\end{equation}

Classical FL algorithms such as FedAvg~\cite{li2019convergence}  perform server-side aggregation by averaging local model updates from participating clients.
However, such approaches typically require identical model architectures across clients, limiting their applicability in scenarios with model architecture heterogeneity.

\subsection{Federated Knowledge Distillation}
\label{sec:Federated Knowledge Distillation}
FedKD is effective in addressing model architecture heterogeneity in FL by enabling knowledge transfer among clients with diverse model architectures~\cite{wu2022communication,hinton2015distilling,li2019fedmd,lin2020ensemble}.
To enable knowledge transfer among heterogeneous models, KD is performed using a small unlabeled public dataset
$\mathcal{D}_{\mathrm{pub}}$. We define a prototype as a specific model architecture. Different clients in FedKD may share the same prototype. Let $\mathcal{N}_p$ denote the set of clients adopting prototype $p$. Classical FedKD methods aggregate local client models through a hybrid strategy that combines parameter averaging among clients with the same model architecture and logit averaging across heterogeneous models~\cite{lin2020ensemble}.

In FedKD, each training round $t$ proceeds as follows. At the start of each training round, each prototype $p$ with parameter $\boldsymbol{\omega}_t^p$ is broadcast to the corresponding clients to process distributed model training as follows.
\begin{itemize}
\item \textbf{Local training}: 
Each client $n$ performs local optimization on its private dataset and computes a stochastic local gradient:
\begin{equation}
\boldsymbol{g}_t^n
=
\nabla f_n(\boldsymbol{\omega}_t^n;\varsigma_t^n),
\end{equation}
where $\boldsymbol{\omega}_t^n$ denotes the local model parameters of client $n$ at communication round $t$, and $\varsigma_t^n$ represents the mini-batch sampled from the local dataset of client $n$.

Instead of uploading the updated local model, each client transmits the computed gradient 
$\boldsymbol{g}_t^n$ to the server. Upon receiving the local gradients, the server updates the corresponding local models by gradient descent and obtain the updated model $\tilde{\boldsymbol{\omega}}_t^n$. 

\item \textbf{Parameter averaging}: 
After obtaining the updated local models 
$\{\tilde{\boldsymbol{\omega}}_t^n\}_{n=1}^{N}$,
the server performs parameter averaging for each model prototype $p$:

\begin{equation}
\bar{\boldsymbol{\omega}}_t^p
=
\sum_{n\in \mathcal{N}_p}
\frac{1}{|\mathcal{N}_p|}
\tilde{\boldsymbol{\omega}}_t^n,
\quad \forall p,
\label{eq:parameter_averaging}
\end{equation}
where  $\bar{\boldsymbol{\omega}}_t^p$ represents the aggregated prototype model at round $t$.

\item \textbf{Global knowledge construction (logit averaging)}:
The server evaluates each model on the public dataset $\mathcal{D}_{\mathrm{pub}}$ to obtain its logits:
\begin{equation}
z^n(x) = \phi^n(\tilde{\boldsymbol{\omega}}_t^n; x), \quad \forall x \in D_{\text{pub}},
\end{equation}
where $\phi^n(\boldsymbol{\omega};x)$ denotes the logit output of client model $n$ with parameters $\boldsymbol{\omega}$ for input sample $x$. 
The global logits are computed by averaging the logits of all clients:
\begin{equation} \label{eq:avg_logit}
\bar{z}(x) = \frac{1}{N} \sum_{n=1}^{N} z^n(x).
\end{equation}
The corresponding global soft targets are obtained by applying the softmax function:
\begin{equation}
\tilde{y}(x) = \mathrm{softmax}\big(\bar{z}(x)\big).
\end{equation}

\item \textbf{Global KD}: To enable each prototype to learn the global knowledge, for each prototype $p$, the server distills the global knowledge $\tilde{y}(x)$ into the aggregated prototype model $\bar{\boldsymbol{\omega}}_t^p$ by minimizing the KL divergence:
\begin{align} \label{eq:kl_loss}
\mathcal{L}_{\text{KD}}^p\!({\bar{\boldsymbol{\omega}}}^p_t)\!=\!\frac{1}{|D_{\text{pub}}|}\!\sum_{x \in\!D_{\text{pub}}}\!\mathcal{L}_{\text{KL}}\!\left(\!\text{softmax}\bigl(\psi^p(\bar{\boldsymbol{\omega}}^p_t; x)\bigr), \tilde{y}(x)\!\right)\!,
\end{align}
where $\psi^p(\boldsymbol{\omega};x)$ denotes the logit output of prototype model $p$ with parameters $\boldsymbol{\omega}$ for input sample $x$.

After distillation, the prototype parameters are updated to $\boldsymbol{\omega}_{t+1}^p $ and transmitted back to the corresponding clients:
\begin{equation}
\boldsymbol{\omega}_{t+1}^n \leftarrow {{\boldsymbol{\omega}}}_{t+1}^p, \quad \forall n \in \mathcal{N}_p.
\end{equation}

Each client uses the received model to start the subsequent training round.
\end{itemize}

\subsection{Gradient Compression with Residual Accumulation}
In FedKD, clients are required to transmit their local model gradients to the server, which incurs substantial communication overhead, particularly for large-scale models. To alleviate this burden, various gradient compression strategies with residual accumulation have been proposed to reduce communication costs while preserving convergence guarantees~\cite{wen2017terngrad,lin2017deep,aji2017sparse}. In this section, we first review representative gradient compression techniques, including Top-$K$, Random-$K$, and Periodic-$K$. Then, we introduce the server-side aggregation of the compressed gradients, followed by the analysis to motivate our heterogeneous compression framework.

\subsubsection{Client-Side Compression with Residual Accumulation}
To compensate for information loss introduced by gradient compression, each client maintains a residual vector $\boldsymbol{e}_{t-1}^n$ at the end of round $t-1$, which accumulates the gradient entries that were not transmitted in previous rounds. At round $t$, the input to the compression operator $\mathcal{C}(\cdot;\cdot)$ is an augmented update vector $\boldsymbol{u}_t^n$, defined as the sum of the current gradient and the residual:
\begin{equation}
\boldsymbol{u}_t^n = \boldsymbol{e}_{t-1}^n + \boldsymbol{g}_t^n.
\label{eq:residual_addition_III}
\end{equation}
For a compression strategy 
$s_t^n \in \{\text{Top-$K$}, \text{Random-$K$},$ $\text{Periodic-$K$}\}$ 
, the compressed gradient on client $n$ is 
\begin{equation}
\hat{\boldsymbol{g}}_t^n
=
\mathcal{C}\!\left(\boldsymbol{u}_t^n; s_t^n\right).
\label{eq:compressed to u}
\end{equation}

The resulting vector $\hat{\boldsymbol{g}}_t^n$ is associated with an index set
$\boldsymbol{I}_t^n \subseteq \{1,\dots,d\}$ of cardinality $K$, where
$d$ denotes the gradient dimension and $\left| \boldsymbol{I}_t^n \right| = K$.
For convenience, we define a reconstruction operator $\mathcal{B}(\cdot;\cdot)$ that maps the compressed representation back to the original gradient dimension. In $\mathcal{B}(\hat{\boldsymbol{g}}_t^n;\boldsymbol{I}_t^n)
\in \mathbb{R}^{d}$,
the entries indexed by $\boldsymbol{I}_t^n$ are filled with the transmitted gradient values and the remaining entries are set to zero.

Typical compression strategies are summarized as follows.
\begin{itemize}
\item \textbf{Top-$K$ Compression}: 
Each client selects the $K$ entries of $\boldsymbol{u}_t^n$ with the largest magnitudes~\cite{lin2017deep,aji2017sparse,haddadpour2021federated}:
\begin{equation}
\boldsymbol{I}_t^n = \mathrm{TopK}(\boldsymbol{u}_t^n, K).
\label{eq:topk_selection_general}
\end{equation}
Top-$K$ compression preserves the most informative gradient directions but incurs a relatively high computational cost due to sorting, with computational complexity of $\mathcal{O}(d \log K)$~\cite{lin2017deep,aji2017sparse}.

\item \textbf{Random-$K$ Compression}: 
Each client randomly samples $K$ indices from all gradient dimensions~\cite{wen2017terngrad,alistarh2017qsgd,basu2019qsparse}:
\begin{equation}
\boldsymbol{I}_t^n = \mathrm{RandomK}(\boldsymbol{u}_t^n, K).
\label{eq:randomk_selection_general}
\end{equation}
This strategy has low computational complexity, i.e., $\mathcal{O}(K)$, but it may introduce high gradient variance and low model performance~\cite{wen2017terngrad,alistarh2017qsgd,haddadpour2021federated}.

\item \textbf{Periodic-$K$ Compression}:
Each client maintains a binary indicator vector $\boldsymbol{b}_t^n \in \{0,1\}^d$ indicating whether a coordinate has been selected within the current
period that contains $T_p$ iterations~\cite{haddadpour2021federated}. The selected index set is
\begin{equation}
\boldsymbol{I}_t^n = \operatorname{PeriodicK}(\boldsymbol{b}_t^n, K),
\label{eq:periodicK}
\end{equation}
where unvisited coordinates are prioritized to ensure periodic coverage of all coordinates, followed by random completion if necessary. After transmission, the selected coordinates are marked as visited, and the time-flag vector $\boldsymbol{b}_t^n$ is updated accordingly. Once all coordinates have been selected at least once, the flag vector is reset to initiate a new periodic cycle.

Periodic-$K$ compression avoids costly sorting and ensures periodic updates of all gradient coordinates, achieving a favorable balance between computation efficiency and convergence stability. Its computational cost is dominated by index selection, with a time complexity of $\mathcal{O}(K)$.
\end{itemize}

After compression, the client transmits the compressed gradients $\hat{\boldsymbol{g}}_t^n$ together with its corresponding index set $\boldsymbol{I}_t^n$ to the server.

To ensure convergence of the federated training process, the residual vector $\boldsymbol{e}_t^n$ is updated by preserving the untransmitted gradient entries:
\begin{equation}
\boldsymbol{e}_t^n
=
\boldsymbol{u}_t^n
-
\mathcal{B}
(\hat{\boldsymbol{g}}_t^n,\boldsymbol{I}_t^n).
\label{eq:residual_update_III}
\end{equation}
This residual accumulation effectively compensates for information loss due to compression and is crucial for achieving stable convergence.

\subsubsection{Server-Side Processing of Compressed Gradients}
Upon receiving the compressed representation $(\hat{\boldsymbol{g}}_t^n,\boldsymbol{I}_t^n)$ from participating clients, the server reconstructs the $d$-dimensional compressed gradient vector using the reconstruction operator:
\begin{equation}
\tilde{\boldsymbol{g}}_t^n
=
\mathcal{B}
(\hat{\boldsymbol{g}}_t^n,\boldsymbol{I}_t^n).
\label{eq:server_tecovery}
\end{equation}

The reconstructed gradients are  used to update the corresponding models for global aggregation and KD:
\begin{equation}
\tilde{\boldsymbol{\omega}}_t^n
=
\boldsymbol{\omega}_t^n
-
\eta_t
\tilde{\boldsymbol{g}}_t^n,
\label{eq:compressed_model_update}
\end{equation}
where $\eta_t$ denotes the learning rate in round $t$

\begin{figure}[t]
    \centering

    \begin{minipage}{\linewidth}
        \centering
        \includegraphics[height=0.8cm]{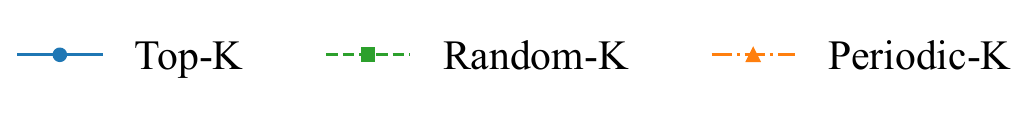}
    \end{minipage}


    \begin{subfigure}[t]{0.32\linewidth}
        \centering
        \includegraphics[width=\linewidth]{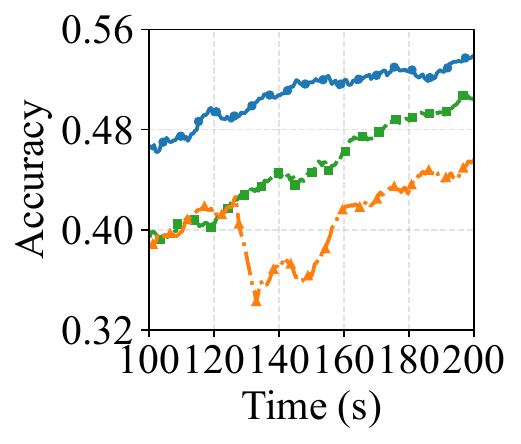}
        \caption{$K = 10^5$}
    \end{subfigure}
    \hfill
    \begin{subfigure}[t]{0.32\linewidth}
        \centering
        \includegraphics[width=\linewidth]{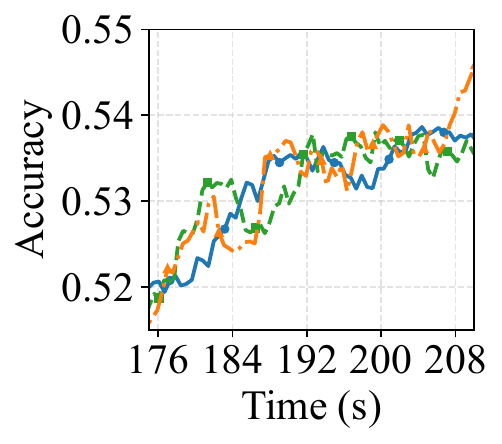}
        \caption{$K=5\times10^5$}
    \end{subfigure}
    \hfill
    \begin{subfigure}[t]{0.32\linewidth}
        \centering
        \includegraphics[width=\linewidth]{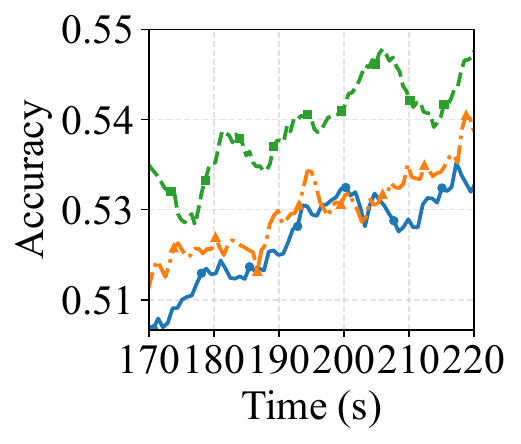}
        \caption{$K=2\times10^6$}
    \end{subfigure}

    \caption{Model performance of three compression strategies under different $K$ (dataset: CIFAR-10, $\alpha = 0.5$).}
    \label{fig:motivation_time_accuracy}
\end{figure}

\subsubsection{Motivation for Heterogeneous Compression}
The three gradient compression strategies, namely Top-$K$, Random-$K$, and Periodic-$K$, exhibit different trade-offs between compression overhead and model performance. 
To investigate whether a uniform compression strategy is suitable for heterogeneous FedKD, we evaluate these strategies under different compression levels and model prototypes on CIFAR-10 and MNIST.

As shown in Fig.~\ref{fig:motivation_time_accuracy}, the preferable compression strategy varies with the compression level. Specifically, Top-$K$ tends to achieve better accuracy under aggressive compression, while Random-$K$ becomes more competitive when a larger communication budget is available due to its lower selection overhead.
Furthermore, Fig.~\ref{fig:motivation_model} demonstrates that the optimal strategy also depends on model architectures under the same compression setting, where Top-$K$ is desirable for a smaller model, e.g., LeNet5Half, and Random-$K$ is better for a larger model, e.g., LeNet5.

These results indicate that no single compression strategy can consistently achieve the best trade-off across heterogeneous clients. Therefore, enforcing a uniform compression strategy may lead to suboptimal performance in FedKD, motivating the proposed heterogeneous compression framework.

\begin{figure}[t]
    \centering

    \begin{subfigure}[t]{0.32\linewidth}
        \centering
        \includegraphics[width=\linewidth]{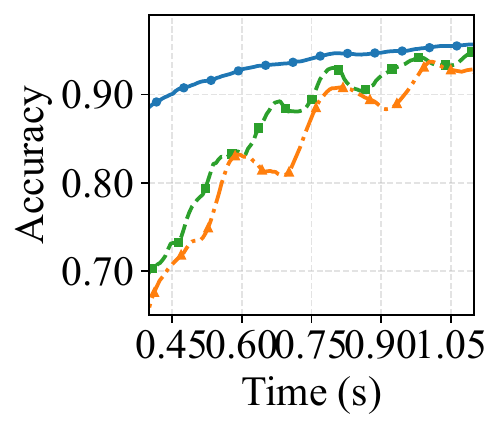}
        \caption{LeNet5Half}
    \end{subfigure}
    \hfill
    \begin{subfigure}[t]{0.32\linewidth}
        \centering
        \includegraphics[width=\linewidth]{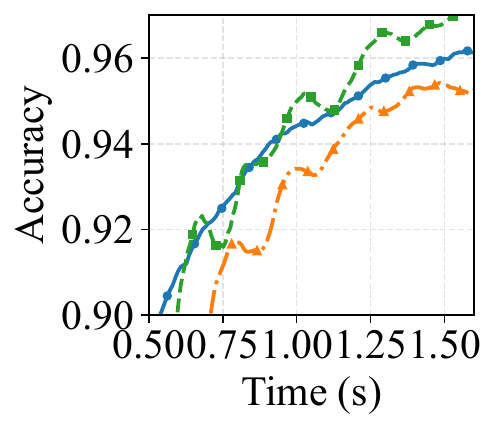}
        \caption{LeNet5}
    \end{subfigure}
    \hfill
    \begin{minipage}[t]{0.24\linewidth}
        \centering
        \includegraphics[width=\linewidth]{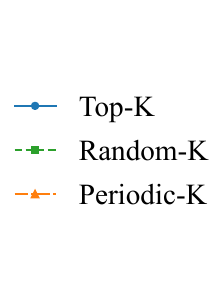}
    \end{minipage}

    \caption{Model performance of three compression strategies under different model prototypes
    (dataset: MNIST, $\alpha = 0.5$, $K = 1 \times 10^{3}$).} 
    \label{fig:motivation_model}
\end{figure}

\section{FedKD with Heterogeneous Compression} 
\label{sec:heterogeneous-compression}
This section introduces a heterogeneous compression framework that allows clients to adopt different compression strategies based on their model architectures and computational resources. We then provide a convergence analysis for the proposed framework and discuss the motivation for adaptive strategy selection.

\subsection{Heterogeneous Compression Framework}
For each client, we define the candidate compression set $\mathcal{S}$ as a collection of three representative strategies:
\begin{equation}
\mathcal{S}
=
\{
\mathrm{Top}\text{-}K,
\mathrm{Random}\text{-}K,
\mathrm{Periodic}\text{-}K
\}.
\label{eq:M}
\end{equation}
Each element in $\mathcal{S}$ corresponds to a distinct compression strategy, characterized by different trade-offs among model accuracy, computational overhead, and communication efficiency.

Figure~\ref{have_compression_new} illustrates the system architecture of the proposed FedKD framework with heterogeneous gradient compression. As shown, each client independently performs local training on its private dataset, constructs residual-accumulated gradients, and applies a compression strategy from $\mathcal S$ before transmitting compressed gradients and corresponding indices to the server. The server first reconstructs the compressed gradients and updates the corresponding client models. Then, the updated client models are utilized for prototype-wise parameter aggregation among clients with the same model architecture and global logit averaging across all client models. Finally, the server performs KD on the public dataset $\mathcal{D}_{\mathrm{pub}}$ to align global knowledge and broadcasts the updated prototype models back to the corresponding clients~\cite{wu2022communication,hinton2015distilling}.

\begin{figure}[t]
    \centering
    \includegraphics[width=\linewidth]{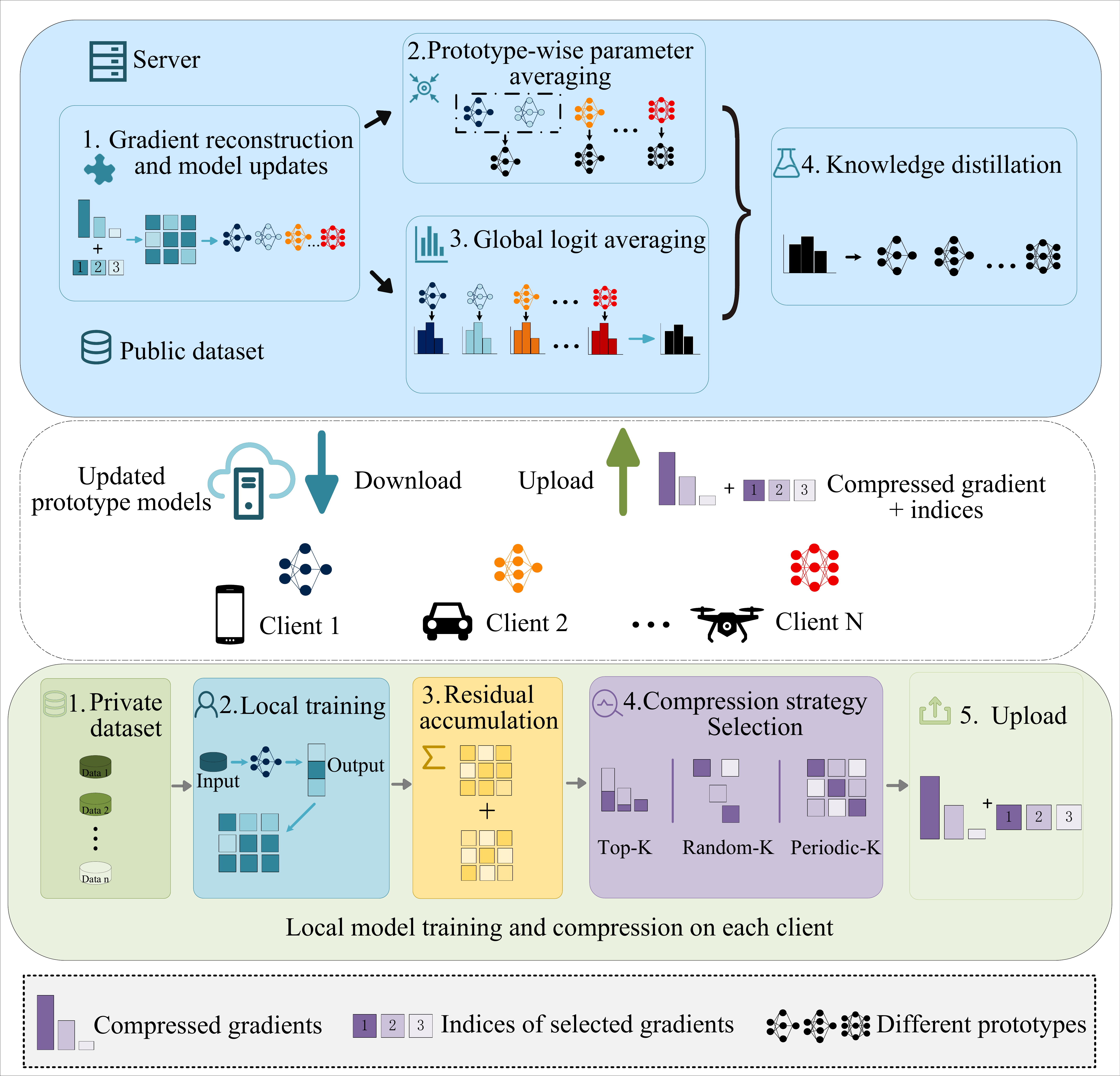}
    \caption{The system architecture of the heterogeneous compression framework for FedKD. }
    \label{have_compression_new}
\end{figure}

\subsection{Convergence Analysis of FedKD with Heterogeneous Compression}
This section analyzes the convergence behavior of the proposed heterogeneous compression framework in FedKD.
Unlike conventional FL methods with homogeneous models, FedKD maintains multiple model prototypes to handle model architecture heterogeneity.
Therefore, we conduct the convergence analysis from the perspective of each model prototype. 

The following assumptions are introduced to analyze the convergence of the proposed heterogeneous compression framework.

\begin{asm}[Smoothness for Local Training Objective]
\label{asm:Smoothness-l}
For each client $n$, the objective function
$f_n(\boldsymbol{\omega})$ 
is $L_l$-smooth~\cite{stich2018local,koloskova2020unified}. Specifically, there exists a constant $L_l>0$ such that
\begin{equation}
\left\|
\nabla f_n(\boldsymbol{\omega})
-
\nabla f_n(\boldsymbol{\omega}')
\right\|
\leq
L_l
\left\|
\boldsymbol{\omega}
-
\boldsymbol{\omega}'
\right\|,
\quad
\forall
\boldsymbol{\omega},\boldsymbol{\omega}' .
\label{eq:smoothnesslocal training}
\end{equation}
\end{asm}
\begin{asm}[Smoothness for KD Objective]
\label{asm:Smoothness-p}
For each prototype $p$, the objective function
$\mathcal{L}_{\text{KD}}^p(\boldsymbol{\omega})$
is $L_p$-smooth \cite{10606337}. Specifically, there exists a constant $L_p>0$ such that
\begin{equation}
\left\|
\nabla \mathcal{L}_{\text{KD}}^p(\boldsymbol{\omega})
-
\nabla \mathcal{L}_{\text{KD}}^p(\boldsymbol{\omega}')
\right\|
\leq
L_p
\left\|
\boldsymbol{\omega}
-
\boldsymbol{\omega}'
\right\|,
\quad
\forall
\boldsymbol{\omega},\boldsymbol{\omega}' .
\label{eq:smoothnessp}
\end{equation}
\end{asm}

\begin{asm}[Unbiased Stochastic Gradient]
\label{asm:Unbiased Stochastic Gradient}
For each client $n\in\mathcal{N}_p$  and round $t$, the stochastic gradient
$\boldsymbol{g}_t^n$ satisfies~\cite{stich2018local} 
\begin{equation}
\mathbb{E}
\left[
\boldsymbol{g}_t^n
\mid
\boldsymbol{\omega}_t^p
\right]
=
\nabla f_n(\boldsymbol{\omega}_t^p).
\label{eq:unbiased_gradient}
\end{equation}
\end{asm}

\begin{asm}[Bounded Stochastic Gradient Variance]
\label{asm:Bounded Stochastic Gradient Variance}
For any client $n$ and round $t$, the variance of stochastic gradients is bounded by a constant
$\sigma^2>0$~\cite{stich2018local,woodworth2020minibatch}, i.e.,
\begin{equation}
\mathbb{E}
\left[
\left\|
\boldsymbol{g}_t^n
-
\nabla f_n(\boldsymbol{\omega}_t^p)
\right\|^2
\right]
\leq
\sigma^2.
\label{eq:gradient_variance}
\end{equation}
\label{ASM:Bounded Stochastic Gradient Variance}
\end{asm}
\begin{asm}[Bounded Error-Feedback Residual]
\label{asm:Bounded Error-Feedback Residual}
For each client $n$ and round $t$, the accumulated residual satisfies~\cite{karimireddy2019errorfeedback,richtarik2021ef21}
\begin{equation}
\mathbb{E}
\left[
\|
\boldsymbol e_t^n
\|^2
\right]
\leq
\Gamma_{\max}
,
\label{eq:error_feedback_bound}
\end{equation}
where $\Gamma_{\max}$ denotes the maximum residual amplification factor among all candidate compression strategies.
\end{asm}
Assumption~\ref{asm:Bounded Error-Feedback Residual} ensures that these untransmitted errors remain stable and do not diverge after accumulating across multiple rounds of residual propagation. 

For theoretical analysis, we define the joint optimization objective of prototype $p$ as
\begin{equation}
F_p(\boldsymbol{\omega}^p_t)
=
f_p(\boldsymbol{\omega}^p_t)
+
\lambda
\mathcal{L}_{KD}^{p}
(\boldsymbol{\omega}^p_t),
\end{equation} 
where $\lambda$ balances the local optimization objective and the global knowledge alignment objective.
Thus, we have the following theorem for the convergence of FedKD with heterogeneous compression for non-convex loss functions.
\begin{thm}[Convergence of FedKD with Heterogeneous Compression]
\label{thm:nonconvex_convergence}
Under Assumptions~\ref{asm:Smoothness-l}--~\ref{asm:Bounded Error-Feedback Residual}, for each prototype $p$, after $T$ communication rounds, the proposed FedKD framework with heterogeneous compression satisfies
\begin{equation}
\frac1T
\sum_{t=0}^{T-1}
\mathbb {E}
\left[
\|
\nabla F_p(\boldsymbol{\omega}_t^p)
\|^2
\right]
\leq
\frac{
4
(
F_p(\boldsymbol{\omega}_0^p)-F_p^\ast+C_k
)
}
{3\eta_tT}
+
\frac{4L}{3}
\eta_t C_g  ,
\label{eq:final_nonconvex_bound}
\end{equation}

By setting
$\eta_t=O(1/\sqrt{T})$,
the convergence rate becomes
\begin{equation}
\frac{1}{T}
\sum_{t=0}^{T-1}
\mathbb{E}
\left[
\left\|
\nabla F_p(\boldsymbol{\omega}_t^p)
\right\|^2
\right]
=
O
\left(
\frac{1}{\sqrt T}
\right).
\label{eq:final_tate}
\end{equation}

\end{thm}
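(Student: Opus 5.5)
The plan is a standard error-feedback analysis built on a virtual sequence, carried out for a single prototype $p$.

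\textbf{Step 1: the round update.} First I would write one round of prototype $p$ as a single effective step. Combining the reconstructed-gradient update \eqref{eq:compressed_model_update} with the parameter averaging \eqref{eq:parameter_averaging} gives
\[
\bar{\boldsymbol{\omega}}_t^p=\boldsymbol{\omega}_t^p-\frac{\eta_t}{|\mathcal N_p|}\sum_{n\in\mathcal N_p}\tilde{\boldsymbol g}_t^n .
\]
I would model the server KD stage as one gradient step on $\lambda\mathcal L^p_{KD}$ with the same step size. Here $f_p$ is read as the prototype average $\frac{1}{|\mathcal N_p|}\sum_{n\in\mathcal N_p} f_n$. With this reading the update is driven by $\nabla F_p$ up to compression error, and $F_p$ is $L$-smooth with $L=L_l+\lambda L_p$ by Assumptions~\ref{asm:Smoothness-l}--\ref{asm:Smoothness-p}.

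\textbf{Step 2: the virtual sequence.} By \eqref{eq:residual_update_III} we have $\tilde{\boldsymbol g}_t^n=\boldsymbol e_{t-1}^n+\boldsymbol g_t^n-\boldsymbol e_t^n$. This identity holds for every strategy in $\mathcal S$, which is why heterogeneity costs nothing beyond Assumption~\ref{asm:Bounded Error-Feedback Residual}. I would define
\[
\boldsymbol x_t=\boldsymbol\omega_t^p-\eta_t\bar{\boldsymbol e}_{t-1},\qquad \bar{\boldsymbol e}_{t}=\frac{1}{|\mathcal N_p|}\sum_{n\in\mathcal N_p}\boldsymbol e_t^n .
\]
Then $\boldsymbol x_{t+1}=\boldsymbol x_t-\eta_t(\bar{\boldsymbol g}_t+\lambda\nabla\mathcal L_{KD}^p)$ is an uncompressed SGD step, with the gradients evaluated at $\boldsymbol\omega_t^p$ rather than at $\boldsymbol x_t$.

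\textbf{Step 3: one-step descent.} I would apply the descent lemma to $F_p$ along $\boldsymbol x_t$ and take conditional expectations using Assumption~\ref{asm:Unbiased Stochastic Gradient}. This yields a linear term $-\eta_t\langle\nabla F_p(\boldsymbol x_t),\nabla F_p(\boldsymbol\omega_t^p)\rangle$ and a quadratic term $\frac{L\eta_t^2}{2}\mathbb E\|\cdot\|^2$. For the quadratic term, Assumption~\ref{asm:Bounded Stochastic Gradient Variance} gives the bound $\frac{L\eta_t^2}{2}(\|\nabla F_p(\boldsymbol\omega_t^p)\|^2+\sigma^2)$. I would split the inner product with Young's inequality, $\langle a,b\rangle\ge\frac12\|b\|^2-\frac12\|a-b\|^2$ or a variant with weight $\frac14$. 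Then $\|\nabla F_p(\boldsymbol x_t)-\nabla F_p(\boldsymbol\omega_t^p)\|^2\le L^2\eta_t^2\Gamma_{\max}$ by smoothness and Assumption~\ref{asm:Bounded Error-Feedback Residual}.

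\textbf{Step 4: telescoping.} With $\eta_t\le 1/(4L)$ the coefficient of $\|\nabla F_p(\boldsymbol\omega_t^p)\|^2$ becomes $\frac{3\eta_t}{4}$, which is where the factor $\frac{4}{3}$ in \eqref{eq:final_nonconvex_bound} comes from. Summing over $t=0,\dots,T-1$ telescopes $F_p(\boldsymbol x_0)-F_p(\boldsymbol x_T)\le F_p(\boldsymbol\omega_0^p)-F_p^\ast$, since $\boldsymbol e_{-1}=0$ gives $\boldsymbol x_0=\boldsymbol\omega_0^p$. Dividing by $\frac{3}{4}\eta_t T$ with $\eta_t$ treated as constant produces the bound. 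The constant $C_g$ collects $\sigma^2$ together with a bound $G^2$ on the gradient norms used in the quadratic term. The constant $C_k$ absorbs the residual-mismatch contributions $L^2\eta_t^3\Gamma_{\max}T$; these stay $O(1)$ once $\eta_t=O(1/\sqrt T)$, or are placed in the $\eta_t C_g$ term. Choosing $\eta_t=c/\sqrt T$ balances $\frac{1}{\eta_t T}$ against $\eta_t$ and gives \eqref{eq:final_tate}.

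\textbf{Main obstacle.} The hardest step is the KD stage. The server distills at the averaged point $\bar{\boldsymbol\omega}_t^p$ toward targets $\tilde y$ that themselves depend on all prototypes. Strictly, the objective $F_p$ therefore changes between rounds, and the KD gradient is evaluated at a shifted point. I would first handle this by smoothness, bounding $\|\nabla\mathcal L^p_{KD}(\bar{\boldsymbol\omega}_t^p)-\nabla\mathcal L^p_{KD}(\boldsymbol\omega_t^p)\|\le L_p\eta_t\|\cdot\|$, which is an $O(\eta_t^2)$ perturbation. I would also treat $\tilde y$ as fixed within a round, with any drift in the targets absorbed into $C_k$.
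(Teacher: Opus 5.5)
Your proposal is a valid alternative, but it does not follow the paper's route.

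\textbf{What the paper does.} It uses no virtual sequence. It applies the descent lemma directly at $\boldsymbol\omega_t^p$ and writes the update direction as $\nabla F_p(\boldsymbol\omega_t^p)+\boldsymbol\delta_t^p$, with $\boldsymbol\delta_t^p=\boldsymbol e_{t-1}^p-\boldsymbol e_t^p+\boldsymbol\xi_t^p$ (Lemma~\ref{lem:gradient_deviation}). It bounds $\frac{L\eta_t^2}{2}\|\nabla F_p+\boldsymbol\delta_t^p\|^2$ by $L\eta_t^2\|\nabla F_p\|^2+L\eta_t^2C_g$, where $C_g=3(2\Gamma_{\max}+\sigma^2)$, and gets $\frac{3\eta_t}{4}$ from $\eta_t-L\eta_t^2$. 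It then asserts that $\sum_t\eta_t\mathbb E\langle\nabla F_p(\boldsymbol\omega_t^p),\boldsymbol e_{t-1}^p-\boldsymbol e_t^p\rangle$ is bounded by a constant $C_k$, citing standard error-feedback results. This is shorter and gives exactly the constants in \eqref{eq:final_nonconvex_bound}. However, that sum does not telescope, because $\nabla F_p(\boldsymbol\omega_t^p)$ changes with $t$.

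\textbf{What your route buys.} Your perturbed iterate is exactly what justifies that step. The residual enters only through $\|\boldsymbol x_t-\boldsymbol\omega_t^p\|^2=\eta^2\|\bar{\boldsymbol e}_{t-1}\|^2$, so its total contribution is $O(L^2\eta^3\Gamma_{\max}T)$. Since $L\eta\le\frac14$, this folds into the $\frac{4L}{3}\eta C_g$ term, and your $C_k$ can be zero.

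Similarly, the paper's identity \eqref{eq:update_direction_decomposition} tacitly evaluates $\nabla\mathcal L^p_{KD}$ at $\boldsymbol\omega_t^p$ instead of $\bar{\boldsymbol\omega}_t^p$. Your smoothness correction is the honest fix, but it has a cost. The shift $\bar{\boldsymbol\omega}_t^p-\boldsymbol\omega_t^p=-\eta\bar{\boldsymbol g}_t^p$ involves $\|\nabla f_p\|$, which $\|\nabla F_p\|$ alone does not control. So you need your bounded-gradient constant $G$, which is not among Assumptions~\ref{asm:Smoothness-l}--\ref{asm:Bounded Error-Feedback Residual}. Either state it as an extra hypothesis or adopt the paper's simplification explicitly.

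\textbf{Remaining costs.} These are minor:
\begin{itemize}
\item The recursion for $\boldsymbol x_{t+1}$ needs a constant step size. It also needs unbiasedness conditional on the whole history, including the residuals, which is slightly more than Assumption~\ref{asm:Unbiased Stochastic Gradient}.
\item Getting $\frac34$ requires the split $\langle a,b\rangle\ge\frac78\|b\|^2-2\|a-b\|^2$; the $\frac12$-weighted split gives only $\frac{3\eta}{8}$. The KD-shift terms erode this coefficient further, so your constants will differ from the paper's.
\item Drift of the targets $\tilde y$ cannot be absorbed into $C_k$ without a summable-drift assumption. The paper sidesteps this by treating $\mathcal L^p_{KD}$ as a fixed function in Assumption~\ref{asm:Smoothness-p}, and you should do the same.
\end{itemize}
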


The proof of the theorem can be found in our online report {\color{red}[*]}.
Theorem~\ref{thm:nonconvex_convergence}
shows that the proposed method converges  with a convergence rate of
$O(1/\sqrt{T})$, which equals the convergence rate of existing FL methods.

\subsection{Motivation for Adaptive Compression Algorithm}
Although the heterogeneous compression framework enables clients to adopt different compression strategies according to their characteristics, determining the appropriate strategy for each client remains challenging.
Specifically, the performance of each compression strategy cannot be known before execution and may vary during training due to dynamic model states and system conditions.

Therefore, an online strategy selection mechanism is required to allow each client to adaptively identify suitable compression strategies based on observed training feedback. This motivates the formulation of compression strategy selection as a non-stationary stochastic MAB problem, where each compression strategy corresponds to an arm and the observed efficiency-aware performance gain serves as the reward.

\section{FedKD with Adaptive Heterogeneous Compression}
\label{sec:adaptive-selection}

In this section, we first formulate the heterogeneous compression strategy optimization problem in FedKD. Then, we formulate the problem as a MAB and propose the adaptive heterogeneous compression algorithm. Finally, we analyze the regret of the online algorithm.

\subsection{Heterogeneous Compression Strategy Optimization Problem}
The heterogeneous compression strategy optimization problem aims to maximize efficiency-aware model performance by determining the compression strategy 
$s^n_t \in \mathcal{S}$ for all clients $n \in \mathcal{N}$ at round $t$.
To achieve this, we define the model performance gain of client $n$ at round $t$ as $\mathcal{G}(s_t^n)$. In FedKD, the time overhead covers the local training time, the computation time for gradient compression, and the communication time for gradient transmission. Thus, the problem is defined as

\begin{problem}[Heterogeneous Compression Strategy Optimization Problem]
\label{pro:hetero_com}
In FedKD with heterogeneous compression, all clients aim to
\begin{equation}
\label{eq:adaptive_obj}
\max_{\{s^n_t\}_{n\in\mathcal{N}}}
\sum_{n=1}^{N} \sum_{t=1}^{T}  \mathcal{G}(s^n_t).
\end{equation}
\end{problem}
Solving Problem~\ref{pro:hetero_com} is challenging. First, the performance gain $\mathcal{G}(s^n_t)$ is stochastic and unknown in advance, which is only revealed after the compression strategy $s^n_t$ is selected and executed by client $n$. 
Moreover, the reward distribution may vary over time due to non-stationary factors such as data heterogeneity, changes in loss values, and dynamic communication conditions. 
This observation motivates us to address Problem~\ref{pro:hetero_com} in an online decision-making framework.
Specifically, we formulate adaptive compression strategy selection as a non-stationary stochastic MAB problem, where each client selects a compression strategy from a finite candidate set at each communication round and observes the corresponding reward as feedback. 
The objective is to learn an effective strategy-selection policy that maximizes the cumulative expected performance gain over time.

\subsection{MAB Formulation}
We first formulate the time overhead and reward. Then, we reformulate Problem~\ref{pro:hetero_com} as an MAB. In this reformulation, each client $n$ decides a sequence of compression strategies $\{s_t^n\}$ over training rounds.

\subsubsection {Time overhead}
We formulate the client-side training time for each training round $t$ as
\begin{equation}
\label{eq:adaptive_time}
\tau_t^n(s_t^n) = \nu^n({s_t^n},K) + \zeta^n(\boldsymbol{\omega}_t^n) + \vartheta^n(K)
,
\end{equation}
where $\nu^n(\cdot,\cdot)$ denotes the compression time that depends on the selected strategy, model architecture, and compression level $K$, $\zeta^n(\cdot)$ represents the local training time of client $n$,  and $\vartheta(\cdot)$ represents the transmission time of the compressed gradients and their indices.

\subsubsection{Reward}
The reward of client $n$ is defined as an efficiency-aware performance gain:
\begin{equation}
\label{eq:adaptive_teward}
\mathcal{G}(s_t^n)
=
\frac{\Delta \mathcal{L}_n(t)}{\tau_t^n(s_t^n)},
\end{equation}
where $\Delta \mathcal{L}_n(t)$ quantifies the model improvement induced by the selected compression strategy. 
Specifically, for each client $n$ with prototype $p$ at round $t$, $\Delta \mathcal{L}_n(t)$ is measured by jointly considering global knowledge alignment that reflects global knowledge improvement and local optimization progress that captures the client-specific loss reduction on private data:
\begin{equation}
\label{eq:deltaL}
\begin{aligned}
\Delta \mathcal{L}_n(t)
= {} & \rho \Big(
\mathcal{L}_{\text{KD}}^p({\bar{\boldsymbol{\omega}}}^p_{t-1}) -
\mathcal{L}_{\text{KD}}^p({\bar{\boldsymbol{\omega}}}^p_t)
\Big) \\
& + \beta \Big(
f_n(\boldsymbol{\omega}_{t-1}^n)
-
f_n(\boldsymbol{\omega}_t^n)
\Big),
\end{aligned}
\end{equation}
where $\rho,\beta>0$ denote the weighting coefficients.
The global KD loss for prototype $p$, i.e., $\mathcal{L}_{\text{KD}}^p({\bar{\boldsymbol{\omega}}}^p_t)$ and the local loss of client $n$, i.e., $f_n(\boldsymbol{\omega}_t^n)$ are defined in \eqref{eq:kl_loss} and \eqref{eq:local_loss}, respectively. The KD loss component is computed at the server and then is fed back to the corresponding clients.

By normalizing the combined performance gain with the execution time
$\tau_t^n$, the proposed reward encourages compression strategies that achieve faster convergence with lower computational and communication overhead.
Furthermore, we apply a clipping function to the reward to avoid
numerical instability caused by large reward fluctuations:
\begin{equation}
\mathcal{G}(s_t^n) \leftarrow \operatorname{clip}\!\left(\mathcal{G}(s_t^n), \mathcal{G}_{\min}, \mathcal{G}_{\max}\right). 
\end{equation}
In our experiments, the thresholds are set to
$\mathcal{G}_{\min}=-1$ and
$\mathcal{G}_{\max}=1$.

\subsubsection{MAB Formulation} The reward of each compression strategy is time-varying because it evolves with dynamic model performance during training. Therefore, we model the proposed strategy selection as a non-stationary stochastic MAB. Unlike adversarial settings where rewards may change arbitrarily, the reward transition in our framework follows the intrinsic optimization dynamics. We define the MAB problem as follows.
\begin{problem}[Non-stationary Stochastic MAB formulation]\label{pro:mab}
    We reformulate Problem~\ref{pro:hetero_com} as a non-stationary stochastic MAB problem.
\begin{itemize}
\item \textbf{Arms:} each arm corresponds to a compression strategy $s_t^n \in \mathcal{S}$;

\item \textbf{Policy:} In each training round $t$, client $n$ selects a compression strategy $s_t^n$ to balance exploration and exploitation:
\begin{equation}
s_t^n=
\begin{cases}
\text{Random}(\mathcal{S}),
& \text{with probability }\epsilon_t,\\
\arg\max\limits_{s\in\mathcal{S}} Q_t^n(s),
& \text{with probability }1-\epsilon_t.
\end{cases}
\label{eq:epsilon_greedy}
\end{equation}
where $\epsilon_t\in(0,1)$ denotes the exploration rate and $Q_t^n(s)$ denotes the local utility estimate of strategy $s$ maintained by client $n$ at  round $t$.

\item \textbf{Utility update:} Each client maintains a local utility estimate $Q_t^n(s')$ for each strategy $s'\in\mathcal{S}$. To adapt to non-stationary stochastic reward variation, we employ an EMA-based utility estimator within an $\epsilon$-greedy policy \cite{sutton2018reinforcement}. After observing the reward
$\mathcal{G}(s_t^n)$,
client $n$ updates its utility estimate as follows:
\begin{equation} \label{eq:q_update}
Q_{t+1}^n(s') = 
\begin{cases}
\big(1-\eta_q\big)Q_t^n(s') + \eta_q \mathcal{G}\left(s_t^n\right) & \text{if } s_t^n = s', \\
Q_t^n(s') & \text{if } s_t^n \neq s'.
\end{cases}
\end{equation}
where $\eta_q \in (0,1]$ denotes the EMA step size.
\end{itemize}
\end{problem}

\subsection{Algorithm Design}
In this section, we propose the ASCEND algorithm for Problem~\ref{pro:mab}, where each client independently optimizes its compression strategy in an online manner to cope with dynamic model performance and computational efficiency trade-offs in heterogeneous FedKD.
Specifically, the proposed algorithm combines an adaptive exploration policy and a rollback-based stability safeguard.

\subsubsection{Adaptive Exploration Policy}
We employ a two-phase exploration policy to adapt to different training stages:
\begin{itemize}
\item {Warm-up Phase:}
To avoid early-stage bias in utility estimation, each client performs a fixed
number of warm-up trials for every compression strategy in $\mathcal{S}$.
Let $S = |\mathcal{S}|$ denote the number of available compression strategies.
During the warm-up stage, each strategy $s^n_t \in \mathcal{S}$ is selected exactly
$h$ times by each client, resulting in a total of $hS$ warm-up rounds.
The selection order is randomly permuted to prevent systematic ordering bias.

Each strategy $s^n_t$ is treated as an independent arm and is associated with a reward
estimate $\mathcal{G}(s_t^n)$. 
The reward of strategy $s^n_t$ is estimated as the empirical mean:
\begin{equation}
\label{eq:Q0}
Q_{0}^n(s')
=
\frac{1}{h}
\sum_{t=1}^{h |\mathcal{S}|}
\mathcal{G}(s_t^n) \mathbb{I}_{s_t^n=s'}.
\end{equation}
Here, $\mathbb{I}_{s_t^n=s'}$ indicates whether the selected strategy $s_t^n$ corresponds to the target strategy $s'$. The empirical mean operation provides an initial estimate of the average reward for each strategy for the subsequent adaptive EMA-enhanced $\epsilon$-greedy updates.
As a result, at the end of the warm-up stage, each client maintains $S$ initialized utility
values $\{Q_0^n(s)\}_{s \in \mathcal{S}}$, one for each compression strategy.

\item {Adaptive Phase:} 
For rounds $t > hS$, the compression strategy is selected following \eqref{eq:epsilon_greedy}.
Here, $Q_t^n(s_)$ is initialized by $Q_0^n(s)$ and updated according to ~\eqref{eq:q_update}.
We define the exploration rate in \eqref{eq:epsilon_greedy} as
\begin{equation}
\label{eq:maxr}
\epsilon_t = \min\left(1,\,\frac{c}{\sqrt{t}}\right),
\end{equation}
where $c$ is a fixed hyperparameter.
\end{itemize}

\subsubsection{Rollback-Based Stability Safeguard}
To prevent unstable convergence caused by overly aggressive or rapidly fluctuating
compression strategy selections, we incorporate a rollback-based stability safeguard
inspired by~\cite{yin2018byzantine, karimireddy2020scaffold}. 

A stability check is performed by comparing the global loss between two consecutive
rounds. For any prototype $p$, if a sudden loss surge is detected, i.e.,
\begin{equation}
\mathcal{L}_{\text{KD}}^p({\bar{\boldsymbol{\omega}}}^p_t)>\gamma \mathcal{L}_{\text{KD}}^p({\bar{\boldsymbol{\omega}}}^p_{t-1}),
\label{eq:rollback_trigger}
\end{equation}
where $\gamma > 1$ is a predefined stability threshold,
the system activates the following stabilization procedure:
\begin{itemize}
    \item \textbf{Model Rollback:} All client models and prototype models are restored
    to their parameters from the previous round $t-1$. 
    \item \textbf{Strategy Unification:}
    To provide a conservative fallback under unstable conditions, all clients adopt the Top-$K$ strategy for the next round.
\end{itemize}

By reverting the system to a previously stable state and temporarily constraining strategy heterogeneity, this rollback mechanism effectively mitigates performance degradation caused by erratic compression decisions and enhances the robustness of the proposed adaptive compression framework.

\begin{algorithm}[t]
\caption{ASCEND algorithm}
\label{alg:adaptive_fkd}
\footnotesize
\KwIn{
Client set $\mathcal{N}$, public dataset $\mathcal{D}_{\mathrm{pub}}$, warm-up length $h$, exploration rate $\epsilon_t$, learning rate $\eta_t$, $\eta_q$.
}
\KwOut{Clients' models $\{\boldsymbol{\omega}_T^{n}\}_{n \in \mathcal{N}}$.}

Initialize prototype models $\boldsymbol{\omega}_0^{p}, \forall p$ and distribute them to corresponding clients;\\
Initialize residual vectors $\boldsymbol{e}_0^{n}\leftarrow\boldsymbol{0},\ \forall n \in \mathcal{N}$ and utility estimates $Q_0^n(s)\leftarrow0,\ \forall n \in \mathcal{N},\forall s\in\mathcal{S}$;

\For{$t=1$ \KwTo $T$}{


    \nonl\textbf{Client-side Local Training and Gradient Compression:}\\
    \For{each client $n \in \mathcal{N}$ \textbf{in parallel}}{

        Compute local gradient
        $\boldsymbol{g}_t^{n}$\; \label{algline:local_training}

        Accumulate residuals:
        $\boldsymbol{u}_t^{n}\leftarrow\boldsymbol{e}_{t-1}^{n}+\boldsymbol{g}_t^{n}$\;

        Select compression strategy $s_t^n$ by \eqref{eq:strateg_selection};\label{algline:select_strategy}

        Compress gradients:
        $(\hat{\boldsymbol{g}}_t^{n}, \boldsymbol{I}_t^{n})
        \leftarrow \mathcal{C}(\boldsymbol{u}_t^{n}; s_t^n)$\; \label{algline: compress_grad}

        Update residuals:
        $\boldsymbol{e}_{t}^{n}
        \leftarrow
        \boldsymbol{u}_{t}^{n}
        -
        \mathcal{B}(\hat{\boldsymbol{g}}_t^n;\boldsymbol{I}_t^n)$\;

        Upload $(\hat{\boldsymbol{g}}_t^{n}, \boldsymbol{I}_t^{n})$ to the server\; \label{algline:upload}
    }

    \nonl\textbf{Server-side Prototype-wise Model Aggregation:}\\
    Reconstruct compressed gradients:
    $\tilde{\boldsymbol{g}}_t^{n}
    \leftarrow
    \mathcal{B}(\hat{\boldsymbol{g}}_t^n,\boldsymbol{I}_t^n),
    \forall n \in \mathcal{N}$; \label{algline:resconstruct}
    
    \For{each prototype $p$}{

        Update local models:
    $\tilde{\boldsymbol{\omega}}_t^{n}
        \leftarrow
        \boldsymbol{\omega}_t^{n}
        -
        \eta_t
        \tilde{\boldsymbol{g}}_t^{n},
        \forall n\in\mathcal{N}_{p}$;\label{algline:Update local models}

        Aggregate prototype model by \eqref{eq:parameter_averaging};
        \label{algline:Aggregate prototype model}
    }

    \nonl\textbf{Server-side KD:}\\
    Evaluate all models on $\mathcal{D}_{\mathrm{pub}}$ and compute averaged logits $\bar{z}(x)$ by \eqref{eq:avg_logit};\label{algline:logit_avg}
    
    Update each prototype model by minimizing $\mathcal{L}_{\text{KD}}^p({\bar{\boldsymbol{\omega}}}^p_t)$\;\label{algline:kd}

    Broadcast updated prototype models and global loss to clients\;\label{algline:broadcast}

    \If{$\mathcal{L}_{\text{KD}}^p({\bar{\boldsymbol{\omega}}}^p_t)>\gamma \mathcal{L}_{\text{KD}}^p({\bar{\boldsymbol{\omega}}}^p_{t-1})$}{
        Roll back to round $t\!-\!1$ and enforce unified compression\label{algline:roll_back}; 
        
       }

    \nonl\textbf{Client-side Reward and Utility Update:}\\
    \For{each client $n \in \mathcal{N}$ \textbf{in parallel}}{
        Compute reward $\mathcal{G}(s_t^n)$ by (\ref{eq:adaptive_teward})\;\label{algline:compute_teward}

        Update utility estimate:
       $Q_{t+1}^n(s_t^n)
        =
        (1-\eta_q)\,Q_t^n(s_t^n)
        +
        \eta_q\,\mathcal{G}_t^n(s_t^n)$\label{algline:update_utility}.
    }
}
\end{algorithm}

\subsubsection{Algorithm Description}
The ASCEND algorithm is summarized in Algorithm~\ref{alg:adaptive_fkd}.
In each round $t$, participating clients first perform local training (line~\ref{algline:local_training}).
Each client then selects a compression strategy according to the EMA-enhanced $\epsilon$-greedy policy:
\begin{align} \label{eq:strateg_selection}
        s_t^n\!=\!\begin{cases}
        \text{the next in shuffled sequence},  t \le h|\mathcal{S}|, \\[0.5ex]
        \text{Random}(\mathcal{S}),  t > h|\mathcal{S}| \text{ with probability } \epsilon_t, \\[0.5ex]
        \displaystyle \arg\max_{s\in\mathcal{S}} Q_t^n(s),  t > h|\mathcal{S}| \text{ with probability } 1-\epsilon_t;
        \end{cases}
\end{align}
(line~\ref{algline:select_strategy}), compresses its residual-accumulated gradients (line~\ref{algline: compress_grad}), and uploads the resulting compressed gradient representation to the server (line~\ref{algline:upload}).

To aggregate heterogeneous compressed gradients, the server first reconstructs the received compressed gradients (line~\ref{algline:resconstruct}) and updates the corresponding client models (line~\ref{algline:Update local models}). The updated models within each prototype $p$ are then aggregated through parameter averaging to obtain the prototype models (line~\ref{algline:Aggregate prototype model}). Afterwards, the server adopts a logit-based KD mechanism using a small public dataset $\mathcal{D}_{\mathrm{pub}}$. Specifically, the server evaluates all models on $\mathcal{D}_{\mathrm{pub}}$ and averages their logits to construct global soft targets, which capture shared knowledge across different prototypes (line~\ref{algline:logit_avg}). Each prototype model further aligns its output distribution with the global soft targets via KL-divergence-based KD (line~\ref{algline:kd}). 
The distilled prototype models and the global loss are subsequently broadcast to their corresponding clients for the next round of local training (line~\ref{algline:broadcast}).
In addition, clients compute the instantaneous reward  (line~\ref{algline:compute_teward}) and update the utilities (line~\ref{algline:update_utility}) to facilitate the subsequent strategy selection and global optimization. 

\subsection{Algorithm Analysis}
In this section, we analyze the regret of the proposed ASCEND algorithm. 

We first define the regret used to evaluate ASCEND. Since the reward is affected by stochastic gradients, data sampling, and model evolution, it is treated as a random variable. For notational simplicity, we hereinafter use $\mathcal{G}_t^n(s)$ to denote the reward obtained by client $n$ when selecting compression strategy $s$ in communication round $t$. The expected reward is defined as
\begin{equation}
\mu_t^n(s)=
\mathbb{E}
\left[
\mathcal{G}_t^n(s)
\right].
\label{eq:expected_reward}
\end{equation}
Unlike in stationary bandit problems, $\mu_t^n(s)$ may vary throughout training because the optimization states and KD outcomes continuously evolve. Accordingly, the optimal compression strategy for client $n$ in round $t$ is defined as
\begin{equation}
s_t^{n,*}=
\arg\max_{s\in\mathcal S}
\mu_t^n(s).
\label{eq:optimal_strategy}
\end{equation}
The regret of ASCEND for client $n$ over $T$ communication rounds is then defined as
\begin{equation}
\mathcal{R}_T^n=
\sum_{t=1}^{T}
\left[
\mu_t^n(s_t^{n,*})-
\mu_t^n(s_t^n)
\right],
\label{eq:regret}
\end{equation}
which measures the cumulative expected-reward loss relative to the round-wise optimal compression strategies.

The following assumptions are introduced for analyzing the adaptive strategy
selection behavior of ASCEND.
\begin{asm}[Piecewise Stationary Reward]
\label{asm:piecewise_stationary}
The reward process of each compression strategy is assumed to be piecewise stationary during the training process~\cite{garivier2011switching,liu2018change}. Specifically, the entire communication process is divided into $M$ stationary segments:
\begin{equation}
\{\mathcal{T}_1,\mathcal{T}_2,\ldots,\mathcal{T}_M\},
\end{equation}
where the reward distribution remains approximately unchanged within each segment. That is, for client $n$, compression strategy $s\in\mathcal{S}$, and communication round $t\in\mathcal{T}_i$, the expected reward satisfies
\begin{equation}
\mu_t^n(s)
=
\mu_i^n(s), \forall i,
\label{eq:piecewise_stationary}
\end{equation}
where $\mu_i^n(s)$ denotes the stationary expected reward of strategy
$s$ within segment $\mathcal{T}_i$.
\end{asm}

\begin{asm}[Reward Concentration]
\label{asm:reward_concentration}
The stochastic reward obtained from each compression strategy satisfies the sub-Gaussian concentration property~\cite{auer2002finite}. Specifically, for any $n\in\mathcal{N}, s\in\mathcal{S}$, and $x>0$,
\begin{equation}
\Pr
\left(
\left|
\mathcal{G}_t^n(s)
-
\mu_t^n(s)
\right|
>
x
\right)
\leq
2
\exp
\left(
-\frac{x^2}{2\sigma_g^2}
\right), \forall t> h|\mathcal{S}|,
\label{eq:reward_concentration}
\end{equation}
where $\sigma_g^2$ represents the variance of the reward distribution.
\end{asm}

\begin{asm}[Sufficient Strategy Sampling]
\label{asm:sufficient_sampling}
For each client $n$, stationary segment $\mathcal T_i$, and strategy $s\in\mathcal S$, let $r_i$ denote the first communication round of segment $\mathcal T_i$. We define
\begin{equation}
H_{i,t}^n(s)
=
\sum_{x=r_i}^{t}
\mathbb I\{s_x^n=s\}, \forall s\in \mathcal{S}
\end{equation}
as the number of times that strategy $s$ has been selected from round $r_i$ to round $t$ during the segment $\mathcal T_i$.
Then, there exists a constant $\pi_{\min}\in(0,1]$ such that
\begin{equation}
H_{i,t}^n(s)
\geq
\pi_{\min}(t-r_i+1), \forall s\in \mathcal{S}.
\end{equation}
\end{asm}

Based on the above assumptions, we introduce the following lemmas to characterize the estimation errors and regret incurred by the EMA, exploration, and exploitation phases of ASCEND.

We first characterize how the EMA tracks the expected reward within each stationary segment.
\begin{lem}[EMA Error]
\label{lem:ema_tracking}
For any client $n$ and compression strategy $s\in\mathcal S$, within a stationary segment $\mathcal T_i$, the EMA-based utility estimate satisfies
\begin{equation}
\left|
\mathbb E[Q_t^n(s)]
-
\mu_i^n(s)
\right|
\leq
(1-\eta_q)^m
\left|
\mathbb E[Q_{r_i}^n(s)]
-
\mu_i^n(s)
\right|,
\label{eq:ema_tracking_error}
\end{equation}
where 
$m$ represents the number of updates of strategy $s$ during this segment.
\end{lem}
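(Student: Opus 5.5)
The plan is to unroll the EMA recursion \eqref{eq:q_update} along the rounds at which strategy $s$ is actually selected within segment $\mathcal T_i$, and take expectations using the piecewise-stationarity in Assumption~\ref{asm:piecewise_stationary}.

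Let $t_1<t_2<\dots<t_m$ be the rounds in $\mathcal T_i$ (with $t_k<t$) at which client $n$ selects $s$. Between consecutive selections the estimate is frozen, since the second branch of \eqref{eq:q_update} applies. Hence $Q^n_t(s)=Q^n_{t_m+1}(s)$ and $Q^n_{t_1}(s)=Q^n_{r_i}(s)$. Writing $D_k=Q^n_{t_k+1}(s)-\mu_i^n(s)$ and subtracting $\mu_i^n(s)$ from the first branch gives
\begin{equation*}
D_k=(1-\eta_q)D_{k-1}+\eta_q\bigl(\mathcal G^n_{t_k}(s)-\mu_i^n(s)\bigr),
\end{equation*}
with $D_0=Q^n_{r_i}(s)-\mu_i^n(s)$. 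Unrolling this recursion yields
\begin{equation*}
Q_t^n(s)-\mu_i^n(s)=(1-\eta_q)^m D_0+\eta_q\sum_{k=1}^m (1-\eta_q)^{m-k}\bigl(\mathcal G^n_{t_k}(s)-\mu_i^n(s)\bigr).
\end{equation*}

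Next I take expectations. Because $t_k\in\mathcal T_i$, Assumption~\ref{asm:piecewise_stationary} gives $\mathbb E[\mathcal G^n_{t_k}(s)]=\mu^n_{t_k}(s)=\mu_i^n(s)$, so every noise term has zero mean. Taking absolute values then gives exactly \eqref{eq:ema_tracking_error}, using $|(1-\eta_q)^m|=(1-\eta_q)^m$ for $\eta_q\in(0,1]$.

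The main obstacle is that the selection times $t_k$, and hence $m$, are random and depend on past rewards through the $\epsilon$-greedy rule. So $\mathbb E[\mathcal G^n_{t_k}(s)]=\mu_i^n(s)$ needs justification when indices are random. I would handle this by conditioning: conditional on the history up to round $t_k$ and on the event $\{s^n_{t_k}=s\}$, the reward's conditional mean is $\mu_i^n(s)$. This is the stationary-reward reading of Assumption~\ref{asm:piecewise_stationary}: the reward distribution of an arm within a segment does not depend on the history. The noise terms $\mathcal G^n_{t_k}(s)-\mu_i^n(s)$ then form a martingale difference sequence, and the weighted sum has zero mean by the tower property. To keep the coefficients deterministic, I would state the result conditionally on $m$ (that is, on the selection pattern), or read $m$ as a fixed realized count, as the statement implicitly does. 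The bound then holds conditionally, and $\mathbb E[Q^n_{r_i}(s)]$ enters as the initial term.

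The clipping of rewards only makes $\mathcal G$ bounded, which guarantees integrability. If clipping shifts the mean, I would interpret $\mu_i^n(s)$ as the mean of the clipped reward, consistent with the definition in \eqref{eq:expected_reward}.
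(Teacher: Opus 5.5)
Your core computation (subtract $\mu_i^n(s)$ from \eqref{eq:q_update}, unroll the contraction, and let the noise vanish in expectation) is the paper's argument. The paper iterates the one-step recursion ``for $m$ updates'' as though $m$ were a fixed number and every step were an update, and it never addresses the randomness you flag. You are right that this randomness is the crux, but your repair fails. Under $\epsilon$-greedy, whether $s$ is chosen again after $t_k$ depends on $Q^n(s)$, and hence on $\mathcal G^n_{t_k}(s)$ itself. At a fixed round $t$, the weight $(1-\eta_q)^{m-k}$ multiplying $\mathcal G^n_{t_k}(s)-\mu_i^n(s)$ counts the later selections of $s$ before $t$, so it is correlated with that noise term. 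The weight is therefore not predictable, and the martingale-difference property at $t_k$ plus the tower property does not give $\mathbb E\bigl[(1-\eta_q)^{m-k}(\mathcal G^n_{t_k}(s)-\mu_i^n(s))\bigr]=0$. Conditioning on $m$ or on the selection pattern does not help either. Those events are driven by the rewards, since a high reward makes further selections of $s$ more likely. So $\mathbb E[\mathcal G^n_{t_k}(s)\mid m]\neq\mu_i^n(s)$ in general, and your claim that the bound ``then holds conditionally'' is unsupported.

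In fact, the fixed-round statement is false under adaptive selection. Take $\mu_i^n(s)=Q^n_{r_i}(s)=0$ with rewards $\pm1$ equally likely, a second arm with constant reward $-\eta_q/2$ (and estimate equal to it), and a third arm with constant reward and estimate $-1$. After a $+1$ the greedy choice stays on $s$; after a $-1$ it moves to the second arm. A direct computation gives $\mathbb E[Q^n_{r_i+2}(s)]=-\tfrac12\bigl(1-\tfrac23\epsilon_{r_i}\bigr)(1-\epsilon_{r_i+1})\eta_q^2<0$ whenever $\epsilon_{r_i+1}<1$. The right-hand side of \eqref{eq:ema_tracking_error} is $0$ for every $m$, so the inequality fails; this is the familiar negative bias of greedily sampled estimates.

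What your unrolling does prove, and what the paper's recursion implicitly computes, is the update-indexed version. Fix $m$ deterministically and evaluate the estimate right after the $m$-th update, at the stopping time $t_m+1$. Then the coefficients $(1-\eta_q)^{m-k}$ are deterministic, and $D_{k-1}$ is known before the reward at $t_k$ is revealed. Your conditional reading of Assumption~\ref{asm:piecewise_stationary}, combined with the tower property, then gives $\mathbb E[Q^n_{t_m+1}(s)]-\mu_i^n(s)=(1-\eta_q)^m\bigl(\mathbb E[Q^n_{r_i}(s)]-\mu_i^n(s)\bigr)$. This requires that all $m$ updates are guaranteed to occur while the reward mean is still $\mu_i^n(s)$; restricting to the event $t_m\in\mathcal T_i$ would reintroduce selection bias. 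The statement as written, at a fixed round $t$, is valid only if the update times of $s$ are deterministic or independent of its rewards.
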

Lemma~\ref{lem:ema_tracking} shows that, within a stationary segment, the expected EMA tracking error decreases geometrically with the number of updates $m$. Thus, strategies that are sampled more frequently can adapt more rapidly to the reward distribution of the current segment.

We then derive separate regret bounds for the exploration and exploitation phases.
\begin{lem}[Regret Upper Bound for Exploration ]
\label{lem:exploration_regret}
Under Assumptions~\ref{asm:piecewise_stationary}--\ref{asm:reward_concentration}, the cumulative regret incurred by exploration over $T$ communication rounds satisfies
\begin{equation}
\mathcal{R}_{\mathrm{explore}}^n(T)
\leq
2c\Delta_{\max}\sqrt T ,
\label{eq:exploration_regret_bound}
\end{equation}
where
\begin{equation}
\Delta_{\max}
=
\mathcal G_{\max}
-
\mathcal G_{\min}
\label{eq:Gmax,min}
\end{equation}
is the maximum instantaneous regret under the clipped reward range, and $c$ is the constant in the exploration rate defined in \eqref{eq:maxr}. Consequently,
\begin{equation}
\mathcal{R}_{\mathrm{explore}}^n(T)
=
O(\sqrt T).
\end{equation}
\end{lem}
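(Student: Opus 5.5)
The plan is to bound the regret charged to exploration rounds by (number of expected exploration rounds) $\times$ (worst instantaneous regret). Rewards are clipped to $[\mathcal G_{\min},\mathcal G_{\max}]$, so both $\mu_t^n(s_t^{n,*})$ and $\mu_t^n(s_t^n)$ lie in this interval. Their difference in any round is therefore at most $\Delta_{\max}=\mathcal G_{\max}-\mathcal G_{\min}$, whichever arm is chosen. This holds regardless of the segment structure in Assumption~\ref{asm:piecewise_stationary}. The concentration property in Assumption~\ref{asm:reward_concentration} is not needed for this crude bound, only boundedness; the assumption is listed for consistency with the later exploitation analysis.

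Formally, let $X_t\in\{0,1\}$ be the indicator that client $n$ explores in round $t$, i.e., draws $\text{Random}(\mathcal S)$ in \eqref{eq:strateg_selection}. By the policy, $X_t$ is Bernoulli with parameter $\epsilon_t$, independent of the history. Define
\begin{equation*}
\mathcal R^n_{\mathrm{explore}}(T)=\mathbb E\Big[\sum_{t=1}^T X_t\big(\mu_t^n(s_t^{n,*})-\mu_t^n(s_t^n)\big)\Big]\le \Delta_{\max}\sum_{t=1}^T \epsilon_t .
\end{equation*}
The warm-up rounds $t\le h|\mathcal S|$ contribute at most $h|\mathcal S|\Delta_{\max}$, a constant. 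The cleanest route is to treat them as exploration rounds with $\epsilon_t\le 1$, which are already covered, since $\min(1,c/\sqrt t)\le c/\sqrt t$ when $c\ge 1$. Alternatively, one can note that the warm-up term is a constant absorbed into the $O(\sqrt T)$ statement. I would state the first option, assuming $c\ge\sqrt{h|\mathcal S|}$ or simply $c\ge 1$, or remark on the additive constant otherwise.

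The remaining step is the sum estimate. Using \eqref{eq:maxr},
\begin{equation*}
\sum_{t=1}^T\epsilon_t\le c\sum_{t=1}^T t^{-1/2}\le c\Big(1+\int_1^T x^{-1/2}\,dx\Big)=c(2\sqrt T-1)\le 2c\sqrt T .
\end{equation*}
Combining the two displays gives $\mathcal R^n_{\mathrm{explore}}(T)\le 2c\Delta_{\max}\sqrt T$, hence $O(\sqrt T)$.

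The only delicate point is bookkeeping rather than analysis. First, the regret split must be properly defined, with the exploration part weighted by $X_t$. Second, taking the expectation of the product $X_t(\cdot)$ must be justified: bound the bracket pointwise by $\Delta_{\max}$ before taking expectations, so no independence between $X_t$ and $s_t^n$ is required. Third, the warm-up constant must be handled as described above.
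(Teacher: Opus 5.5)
Your proof is correct and follows the paper's argument. The per-round exploration regret is at most $\epsilon_t\Delta_{\max}$ because the rewards are clipped; then $\epsilon_t\le c/\sqrt{t}$ and $\sum_{t=1}^T t^{-1/2}\le 1+\int_1^T x^{-1/2}\,dx\le 2\sqrt{T}$ finish it. The warm-up detour is unnecessary, since the paper charges those rounds separately as $h|\mathcal S|\Delta_{\max}$ in Theorem~\ref{thm:regret}. Also, $\min(1,c/\sqrt t)\le c/\sqrt t$ holds for every $c>0$, and absorbing the warm-up rounds would actually require $c\ge\sqrt{h|\mathcal S|}$, not merely $c\ge 1$.
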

Lemma~\ref{lem:exploration_regret} indicates that the cumulative cost of exploration grows sublinearly with the number of communication rounds. Hence, the average regret caused by exploration vanishes as $T$ increases.

\begin{lem}[Regret Upper Bound for Exploitation]
\label{lem:exploitation_regret}
Under Assumptions~\ref{asm:piecewise_stationary}--\ref{asm:sufficient_sampling}, the cumulative regret incurred by exploitation for client $n$ satisfies
\begin{equation}
\mathcal{R}_{\mathrm{exploit}}^n(T)
\leq
C_e
\sum_{i=1}^{M}
\sqrt{|\mathcal T_i|},
\label{eq:exploit_segment_bound}
\end{equation}
where $C_e$ is a constant determined by the EMA update coefficient, reward variation, initialization error, and reward concentration parameter.
Consequently,
\begin{equation}
\mathcal{R}_{\mathrm{exploit}}^n(T)
\leq
C_e\sqrt{MT}.
\label{eq:exploit_final_bound}
\end{equation}
\end{lem}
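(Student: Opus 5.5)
We prove Lemma~\ref{lem:exploitation_regret} segment by segment, then combine the segments with the Cauchy--Schwarz inequality. Fix client $n$ and a segment $\mathcal T_i$ of length $L_i=|\mathcal T_i|$. By Assumption~\ref{asm:piecewise_stationary} the means $\mu_i^n(s)$ are constant on $\mathcal T_i$. In an exploitation round $t\in\mathcal T_i$ the chosen arm is $s_t^n=\arg\max_s Q_t^n(s)$.

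The instantaneous regret then obeys the standard greedy decomposition. Since $Q_t^n(s_t^n)\ge Q_t^n(s_t^{n,*})$,
\begin{equation*}
\mu_i^n(s_t^{n,*})-\mu_i^n(s_t^n)\le 2\max_{s\in\mathcal S}\bigl|Q_t^n(s)-\mu_i^n(s)\bigr| .
\end{equation*}
So it suffices to bound the estimation error $\mathbb E|Q_t^n(s)-\mu_i^n(s)|$ uniformly in $s$ and sum it over $t\in\mathcal T_i$.

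We split this error into a bias term and a fluctuation term.
\begin{itemize}
\item \emph{Bias.} Lemma~\ref{lem:ema_tracking} gives $|\mathbb E[Q_t^n(s)]-\mu_i^n(s)|\le(1-\eta_q)^{m}D_0$. Here $D_0\le\Delta_{\max}$, because rewards and hence $Q$ are clipped to $[\mathcal G_{\min},\mathcal G_{\max}]$; this quantity also serves as the initialization error carried over from the previous segment. Assumption~\ref{asm:sufficient_sampling} gives $m=H_{i,t}^n(s)\ge\pi_{\min}(t-r_i+1)$. Summing the resulting geometric series over $t$ contributes at most $D_0/(1-(1-\eta_q)^{\pi_{\min}})$, a constant per segment. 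Since $L_i\ge1$, this is at most a constant times $\sqrt{L_i}$.
\item \emph{Fluctuation.} Write $Q_t^n(s)-\mathbb E[Q_t^n(s)]$ as a weighted sum $\sum_k \eta_q(1-\eta_q)^{m-k}\xi_k$ of the centered sub-Gaussian noises $\xi_k$ from Assumption~\ref{asm:reward_concentration}. This sum is sub-Gaussian with variance proxy at most $\eta_q\sigma_g^2/(2-\eta_q)$. A union bound over the $|\mathcal S|$ arms then bounds $\mathbb E\max_s|\cdot|$ by $O(\sigma_g\sqrt{\eta_q\log|\mathcal S|})$.
\end{itemize}

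The fluctuation term is the main obstacle. With a constant EMA step the noise does not average out, so this term alone would give regret linear in $L_i$. To obtain $\sqrt{L_i}$ I would use a gap/threshold argument. Choose a threshold $\delta_i\asymp L_i^{-1/2}$. Rounds where the estimation error is below $\delta_i$ incur regret at most $2\delta_i$ each, hence $O(\sqrt{L_i})$ in total. For the remaining rounds, the sub-Gaussian tail bounds their probability. The intended key step is to treat $\eta_q$ as small relative to $1/L_i$: then $\eta_q\sigma_g^2/\delta_i^2=O(1)$, and Chebyshev bounds the expected number of bad rounds by $O(\sqrt{L_i})$. The constants from this step, together with the bias constant, form $C_e$, which therefore depends on $\eta_q,\sigma_g,D_0,\pi_{\min}$ and $\Delta_{\max}$ as the statement indicates. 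I will state explicitly that this is the point where the step-size condition enters.

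Summing the per-segment bounds gives $\mathcal R^n_{\mathrm{exploit}}(T)\le C_e\sum_{i=1}^M\sqrt{|\mathcal T_i|}$. Cauchy--Schwarz with $\sum_i|\mathcal T_i|=T$ then yields $\sum_i\sqrt{|\mathcal T_i|}\le\sqrt{M\sum_i|\mathcal T_i|}=\sqrt{MT}$, which is \eqref{eq:exploit_final_bound}.
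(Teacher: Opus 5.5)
Your greedy decomposition, the bias/fluctuation split, the geometric summation of the bias via Lemma~\ref{lem:ema_tracking} and Assumption~\ref{asm:sufficient_sampling}, and the final Cauchy--Schwarz step all match the paper's proof. You are also right that the fluctuation term is the crux. With a constant step $\eta_q$, the centered EMA keeps a variance proxy of order $\eta_q\sigma_g^2/(2-\eta_q)$, and this does not shrink with the number of updates. The paper gets past this point by asserting $\mathbb E|v_t^n(s)|\le C_1/\sqrt{H_{i,t}^n(s)\vee 1}+C_2(1-\eta_q)^{H_{i,t}^n(s)}$. The $1/\sqrt{H}$ term there is the sample-mean rate (step size $1/H$), and it does not follow from Assumption~\ref{asm:reward_concentration} and Lemma~\ref{lem:ema_tracking} for fixed $\eta_q$.

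Your repair, however, fails for three reasons.
\begin{enumerate}
\item The condition $\eta_q\lesssim 1/L_i$ (with $L_i=|\mathcal T_i|$) breaks your bias step. If $\eta_q\le 1/L_i$, then $(1-\eta_q)^{\pi_{\min}(t-r_i+1)}\ge(1-1/L_i)^{L_i}\ge 1/4$ for every $t\in\mathcal T_i$ once $L_i\ge 2$. The bias therefore never decays within the segment. Equivalently, $D_0/(1-(1-\eta_q)^{\pi_{\min}})\approx D_0/(\eta_q\pi_{\min})$ is of order $D_0L_i$, not a constant.
\item Even granting that condition, your Chebyshev step gives $\Pr(\text{bad round})\lesssim\eta_q\sigma_g^2/\delta_i^2=O(1)$ per round. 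That yields $O(L_i)$ bad rounds, not $O(\sqrt{L_i})$. Reaching $O(\sqrt{L_i})$ this way would require $\eta_q\lesssim L_i^{-3/2}$, which makes the first problem worse.
\item $\eta_q$ is a single fixed hyperparameter shared by all segments. A segment-dependent condition on it cannot be imposed, and it is not a hypothesis of the lemma in any case.
\end{enumerate}

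The obstruction is structural, not a matter of bookkeeping. With $\eta_q$ fixed, each arm's EMA keeps a spread of order $\sigma_g\sqrt{\eta_q}$. Take a long stationary segment with two arms whose gap is comparable to $\sigma_g\sqrt{\eta_q}$. The greedy step then picks the worse arm with probability bounded away from zero in every round, so exploitation regret grows linearly in $|\mathcal T_i|$. No threshold argument can turn that into $\sqrt{|\mathcal T_i|}$.

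Balancing your own two bounds, $D_0/(\eta_q\pi_{\min})+L_i\sigma_g\sqrt{\eta_q}$, gives at best order $L_i^{2/3}$, attained at $\eta_q\asymp L_i^{-2/3}$. To obtain the stated bound you need an ingredient the lemma does not supply. One option is a per-arm step size $1/H_{i,t}^n(s)$ restarted at each breakpoint, i.e.\ the within-segment sample mean. It is unbiased after one update and fluctuates at rate $\sigma_g/\sqrt{H_{i,t}^n(s)}$, but it requires knowing the breakpoints, which ASCEND does not. The other option is to adopt the paper's estimation-error inequality as an explicit assumption.
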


Lemmas~\ref{lem:exploration_regret} and~\ref{lem:exploitation_regret} jointly characterize the regret incurred by the two adaptive selection phases of ASCEND. 
Under a piecewise-stationary reward model, we derive an $O(\sqrt{T})$ exploration-regret bound and an $O(\sqrt{MT})$ exploitation-regret bound.

Based on the preceding lemmas, we establish the overall regret bound of ASCEND.
\begin{thm}[Regret Upper Bound of ASCEND]
\label{thm:regret}
For each client $n$, the cumulative regret of ASCEND over $T$ communication rounds satisfies
\begin{equation}
\mathcal{R}_{\mathrm{ASCEND}}^n(T)
\leq
h|\mathcal S|\Delta_{\max}
+
2c\Delta_{\max}\sqrt T 
+
C_e\sqrt{MT},
\end{equation}
where the three terms correspond to the regret incurred during the warm-up, exploration, and exploitation phases, respectively. Since the warm-up length $h|\mathcal{S}|$ is fixed, the overall regret satisfies
\begin{equation}
\mathcal{R}_{\mathrm{ASCEND}}^n(T)
=
O(\sqrt{MT}).
\end{equation}
\end{thm}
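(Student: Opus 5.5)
The plan is to split the horizon $\{1,\dots,T\}$ by the rule \eqref{eq:strateg_selection} that generates $s_t^n$, and bound the regret \eqref{eq:regret} on each part separately. The three parts are the warm-up rounds $t\le h|\mathcal S|$, the adaptive rounds in which the random (exploration) branch fires, and the adaptive rounds in which the greedy (exploitation) branch fires. Define the indicator $E_t$ of the event that the exploration branch is taken at round $t>h|\mathcal S|$. Then
\begin{equation*}
\mathcal R^n_T=\sum_{t\le h|\mathcal S|}\delta_t+\sum_{t>h|\mathcal S|}\delta_t\,\mathbb I\{E_t\}+\sum_{t>h|\mathcal S|}\delta_t\,\mathbb I\{E_t^c\},
\end{equation*}
where $\delta_t=\mu_t^n(s_t^{n,*})-\mu_t^n(s_t^n)$. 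Taking expectations over the randomization of the policy, the three sums become the warm-up regret, $\mathcal R^n_{\mathrm{explore}}(T)$, and $\mathcal R^n_{\mathrm{exploit}}(T)$, respectively.

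\textbf{Warm-up term.} Each reward is clipped to $[\mathcal G_{\min},\mathcal G_{\max}]$, so every expected reward $\mu_t^n(s)$ lies in that interval. Hence $0\le\delta_t\le\Delta_{\max}$ for all $t$, as defined in \eqref{eq:Gmax,min}. Summing over the $h|\mathcal S|$ warm-up rounds gives at most $h|\mathcal S|\Delta_{\max}$. No stochastic assumption is needed for this term.

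\textbf{Exploration and exploitation terms.} I will not redo these arguments. The exploration sum is bounded by $2c\Delta_{\max}\sqrt T$ via Lemma~\ref{lem:exploration_regret}, which rests on $\sum_t c/\sqrt t\le 2c\sqrt T$. The exploitation sum is bounded by $C_e\sqrt{MT}$ via Lemma~\ref{lem:exploitation_regret}, where the passage from $\sum_i\sqrt{|\mathcal T_i|}$ to $\sqrt{MT}$ is Cauchy--Schwarz with $\sum_i|\mathcal T_i|=T$. Adding the three bounds yields the stated inequality.

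\textbf{Main obstacle: compatibility of the decomposition with the lemmas.} This is where I expect the real work to lie.
\begin{itemize}
\item The lemmas are stated over all $T$ rounds, whereas the decomposition restricts them to $t>h|\mathcal S|$. Restricting only shrinks each sum, since every summand is nonnegative, so the lemma bounds still apply.
\item The rollback rounds force Top-$K$ rather than following the policy. I would treat these forced rounds as exploitation-type rounds with instantaneous regret at most $\Delta_{\max}$. One option is to absorb them into $C_e$. Alternatively, I would state explicitly that they are governed by the same per-segment analysis, assuming finitely many rollbacks per segment.
\end{itemize}

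\textbf{Order.} Finally, the warm-up term is a constant. Since $M\ge1$, we have $2c\Delta_{\max}\sqrt T\le 2c\Delta_{\max}\sqrt{MT}$. Therefore the total regret is $O(\sqrt{MT})$, with constants depending on $c$, $\Delta_{\max}$ and $C_e$.
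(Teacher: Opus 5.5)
Your proposal is correct and follows essentially the same route as the paper: you bound the warm-up rounds by $h|\mathcal S|\Delta_{\max}$ via reward clipping and then add the bounds of Lemma~\ref{lem:exploration_regret} and Lemma~\ref{lem:exploitation_regret}. Your indicator decomposition, the nonnegativity argument for restricting the lemma sums, and the $M\ge 1$ step make explicit what the paper leaves implicit. Your remark on the forced Top-$K$ rollback rounds goes beyond the paper, which ignores them entirely; handling them as you suggest does need an extra assumption not in the statement, such as a bounded number of rollbacks per segment, which contributes $O(M)\le O(\sqrt{MT})$.
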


The proofs of Lemmas~\ref{lem:ema_tracking}--\ref{lem:exploitation_regret} and Theorem~\ref{thm:regret} are provided in {\color{red}[*]}.

Theorem~\ref{thm:regret} demonstrates that ASCEND asymptotically approaches the performance of the round-wise optimal compression strategy in a piecewise-stationary training environment. The regret bound further reveals the impact of environmental non-stationarity: a larger number of stationary segments $M$ leads to a higher tracking cost because ASCEND must repeatedly adapt its utility estimates to changes in the reward distributions.

Combined with Theorem~\ref{thm:nonconvex_convergence}, which establishes convergence under any compression strategy in $\mathcal{S}$, Theorem~\ref{thm:regret} shows that ASCEND can adaptively select communication-efficient compression strategies without compromising the convergence guarantee of the underlying federated optimization process.

\section{Experiments}\label{sec:evaluation}
In this section, we evaluate the effectiveness and robustness of the proposed ASCEND algorithm via experiments.
We introduce the experimental settings, present empirical results on a real-world platform and various experimental evaluations, and conduct sensitivity analysis to investigate the impact of critical hyperparameters.

\subsection{Setup}
We introduce the setup, including federated settings, datasets, models, time overheads, and baselines.

\subsubsection{Federated setting}
We build a real-world platform with 10 Raspberry Pi devices, which act as resource-constrained edge clients controlled by a server.
For experimental evaluation, we evaluate a total of 10 clients, each of which is assigned a local model architecture from the candidate prototype set.
The batch size is set to 32, and the learning rate is 0.02. 
The weighting parameters $\rho$ and $\beta$ in \eqref{eq:deltaL}
are set to 0.6 and 0.4, respectively, and the fixed hyperparameter $c$ in \eqref{eq:maxr} is set to 3 in all experiments.
These default values are chosen to achieve favorable empirical performance, and we conduct sensitivity analysis to investigate the impact of hyperparameter variations.

\subsubsection{Datasets} We conduct experiments on two datasets, namely MNIST~\cite{lecun2002gradient} and CIFAR-10~\cite{krizhevsky2009learning}. 
To simulate realistic non-independent and identically distributed (non-IID) settings,  datasets are partitioned using a Dirichlet distribution with parameter $\alpha$ that controls the degree of data heterogeneity. A smaller $\alpha$ leads to a more skewed class distribution, while a larger $\alpha$ yields a more balanced data distribution.
Unless otherwise specified, we set $\alpha = 0.5$.

\subsubsection{Models}

For the MNIST dataset, we employ two lightweight convolutional neural networks, namely LeNet5 and LeNet5Half, where LeNet5Half is a shallower
and more compact variant of LeNet5.
The total number of trainable parameters is 61,706 for LeNet5 and 15,738 for LeNet5Half, respectively.
For the CIFAR-10 dataset, we consider two architectures, ResNet18 and WResNet40-2.
ResNet18 contains 11,183,562 parameters, while WResNet40-2 contains 2,248,954 parameters. Compared with ResNet18, WResNet40-2 adopts a wider residual architecture with increased width and a different depth-width trade-off, providing a useful complementary model pair for evaluating the proposed adaptive compression framework under heterogeneous model sizes and architectural characteristics.

\subsubsection{Time overheads}
For the platform, we collect the real computation and communication time and feed them back to the ASCEND algorithm for compression strategy determination.
For the experimental evaluation, the computation time, including local training and gradient compression, is instantiated using the real measured execution time of the system on an NVIDIA RTX 3090 GPU and an AMD Ryzen 7 4800H CPU.
All computation times are obtained by averaging over multiple runs. The local training time is reported in Table~\ref{tab:local_training}, while representative compression times for different values of $K$ are summarized in Tables~\ref{tab:mnist_time} and~\ref{tab:cifar_time}. 

\begin{table}[t]
\centering
\caption{Local Training Time per Round (seconds)}
\label{tab:local_training}
\begin{tabular}{c|c|c}
\hline
Dataset & Model & Time (s) \\
\hline
MNIST & LeNet5Half & 0.0025 \\
MNIST & LeNet5 & 0.0045 \\
\hline
CIFAR-10 & WResNet40-2 & 0.120 \\
CIFAR-10 & ResNet18 & 0.194 \\

\hline
\end{tabular}
\end{table}

\begin{table}[t]
\centering
\small
\caption{Gradient Compression Time on MNIST (seconds)}
\label{tab:mnist_time}
\begin{tabular}{c|c|ccc}
\hline
Model & Method & $K=5\times10^2$ & $10^3$ & $5\times10^3$ \\
\hline
\multirow{3}{*}{LeNet5Half}
 & Top-$K$ & 1.70e-7 & 3.40e-7 & 1.70e-6 \\
 & Random-$K$ & 3.50e-8 & 7.00e-8 & 3.50e-7 \\
 & Periodic-$K$ & 1.00e-7 & 2.00e-7 & 1.00e-6 \\
\hline
\multirow{3}{*}{LeNet5}
 & Top-$K$ & 3.05e-7 & 6.10e-7 & 3.05e-6 \\
 & Random-$K$ & 6.00e-8 & 1.20e-7 & 6.00e-7 \\
 & Periodic-$K$ & 1.75e-7 & 3.50e-7 & 1.75e-6 \\
\hline
\end{tabular}
\end{table}

\begin{table}[t]
\centering
\small
\caption{Gradient Compression Time on CIFAR-10 (seconds)}
\label{tab:cifar_time}
\begin{tabular}{c|c|ccc}
\hline
Model & Method & $K=10^5$ & $5\times10^5$ & $2\times10^6$ \\
\hline
\multirow{3}{*}{WResNet40-2}
 & Top-$K$ & 1.3e-4 & 3.0e-4 & 9.0e-4 \\
 & Random-$K$ & 2.0e-5 & 3.0e-5 & 7.5e-5 \\
 & Periodic-$K$ & 6.5e-5 & 1.5e-4 & 5.0e-4 \\
\hline
\multirow{3}{*}{ResNet18}
 & Top-$K$ & 6.5e-4 & 1.5e-3 & 4.5e-3 \\
 & Random-$K$ & 4.0e-5 & 5.0e-5 & 2.0e-4 \\
 & Periodic-$K$ & 3.2e-4 & 7.5e-4 & 2.55e-3 \\
\hline
\end{tabular}
\end{table}

The communication time for client $n$ is modeled as
\begin{equation}
\varphi^n(K) = \frac{K \cdot b_v + K \cdot b_i}{B\times 10^6},
\end{equation}
where $K$ denotes the number of retained gradient coordinates after compression, $b_v$ and $b_i$ represent the numbers of bits used to encode each gradient value and its corresponding index, respectively. 
$B$ denotes the bandwidth.
In our implementation, gradient values are represented using 32-bit floating point precision ($b_v = 32$), while indices are encoded using $\log_2 d$ bits, where $d$ denotes the dimensionality of the original gradient vector.
The default communication bandwidth $B$ is set to 50 Mbps. 

\subsubsection{Baselines}
Since existing adaptive gradient compression mechanisms mainly target compression-ratio adaptation in FL, they are not directly designed for heterogeneous strategy selection in FedKD. Therefore, we compare ASCEND with uniform compression baselines, where all clients adopt the same strategy, and an EXP3-based adaptive compression method.

\subsection{Experimental Results and Analysis} 
We verify the effectiveness of the proposed ASCEND algorithm from multiple perspectives, including its performance on a real-world platform, comparison with different baselines at different levels of compression, robustness under varying levels of data heterogeneity and communication bandwidth, as well as the evolution of client-side strategy selection behaviors during training, and sensitivity analysis to investigate the impact of key hyperparameters.

\subsubsection{Performance on Real-World Platform}
\label{Performance on Real-World Platforms}

To demonstrate the practicality of the proposed ASCEND algorithm, we evaluate its performance on a real-world platform.
The experiments are conducted on the MNIST dataset under different compression levels $K \in \{5\times10^3, 10^4\}$, covering relatively aggressive and moderate compression settings.

\begin{figure}[t]
    \centering
    \begin{subfigure}[t]{0.32\linewidth}
        \centering
         \includegraphics[width=\linewidth]{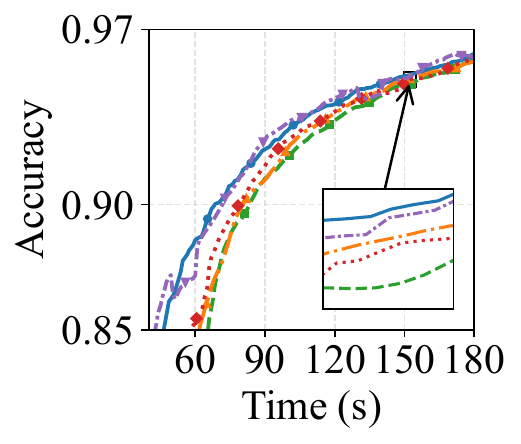}
         \caption{$K=5\times10^3$}
         \label{fig:real,k5000_TIME_PREDICTION}
    \end{subfigure}
    \hfill
    \begin{subfigure}[t]{0.32\linewidth}
        \centering
         \includegraphics[width=\linewidth]{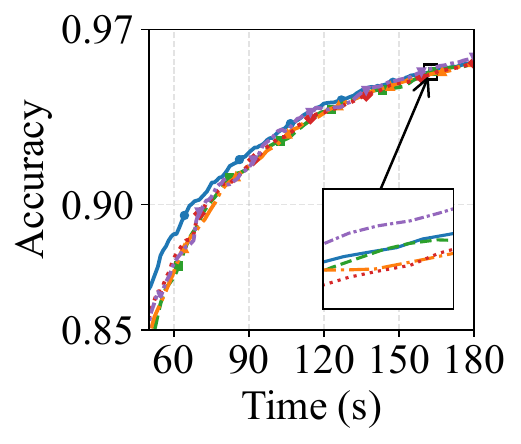}
         \caption{$K=10^4$}
         \label{fig:real,k10000_TIME_PREDICTION}
    \end{subfigure}
    \hfill
    \begin{minipage}[t]{0.24\linewidth}
        \centering
        \includegraphics[width=\linewidth]{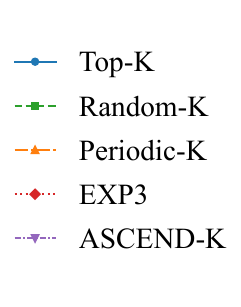}
    \end{minipage}

     \caption{Comparison of different methods on the MNIST dataset deployed on a real-world platform under different compression levels ($\alpha = 0.5$).}
     \label{fig:real_platform_mnist}
\end{figure}

The comparison of different methods under different compression levels on the MNIST dataset is presented in Fig.~\ref{fig:real_platform_mnist}.
We can observe that ASCEND achieves competitive performance across different compression levels, demonstrating its ability to effectively balance communication efficiency and model accuracy.
In particular, ASCEND achieves the best performance at the compression level $K=10^4$.
Furthermore, at the compression level $K=5\times10^3$, ASCEND exhibits faster convergence in the early training stage. This behavior can be attributed to its adaptive strategy selection mechanism, which dynamically selects suitable compression strategies based on the observed utility estimates, thereby improving communication efficiency and accelerating empirical risk reduction.

\subsubsection{Performance Comparison with Baselines}
\label{Performance Comparison of Compression Strategies}

Fig.~\ref{fig:mnist_time_comparison} illustrates the performance on MNIST under different compression levels $K \in \{5\times10^{2},\,10^{3},\,5\times10^{3}\}$, corresponding to compression ratios of approximately $0.81\%$, $1.62\%$, and $8.10\%$ for LeNet5 as the reference model. Fig.~\ref{fig:cifar_time_comparison} presents the results on CIFAR-10 with $K \in \{ 10^5, 5 \times 10^5, 2 \times 10^6\}$, corresponding to retained-gradient ratios of approximately $0.89\%$, $4.47\%$, and $17.88\%$ for ResNet18 as the
reference model.
Generally, increasing $K$ improves convergence speed and final accuracy because more gradient information is preserved. However, the optimal compression strategy varies across different compression levels. For example, on MNIST, Top-$K$ performs better when $K=5\times10^{2}$, while Random-$K$ becomes more competitive at $K=5\times10^{3}$; similarly, on CIFAR-10, Top-$K$ is more effective at $K=10^5$, whereas Random-$K$ shows advantages at larger budgets such as $K=2\times10^6$.
Despite these variations, ASCEND consistently achieves strong performance across all compression levels, demonstrating its capability to adaptively balance the trade-off between communication efficiency and model accuracy.

\begin{figure}[t]
    \centering

    \begin{minipage}{\linewidth}
        \centering
        \includegraphics[height=0.8cm]{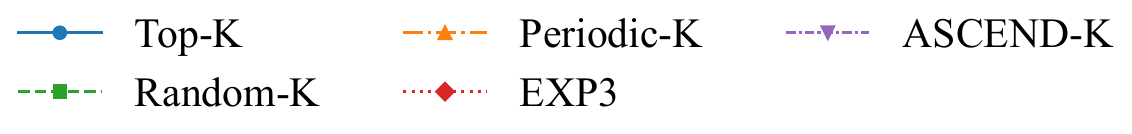}
    \end{minipage}

    \begin{subfigure}[t]{0.32\linewidth}
        \centering
        \includegraphics[width=\linewidth]{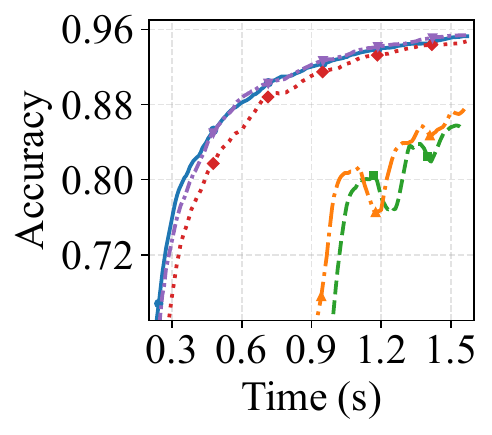}
        \caption{$K=5\times10^2$}
    \end{subfigure}
    \hfill
    \begin{subfigure}[t]{0.32\linewidth}
        \centering
        \includegraphics[width=\linewidth]{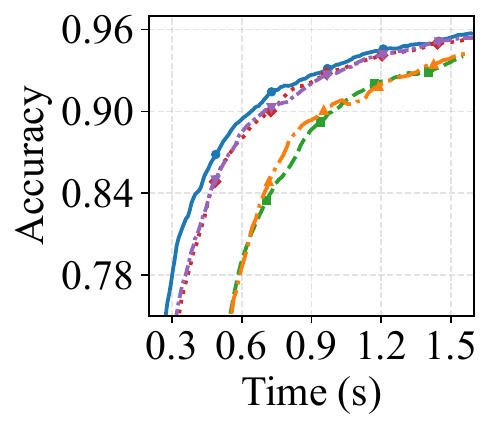}
        \caption{$K=10^3$}
    \end{subfigure}
    \hfill
    \begin{subfigure}[t]{0.32\linewidth}
        \centering
        \includegraphics[width=\linewidth]{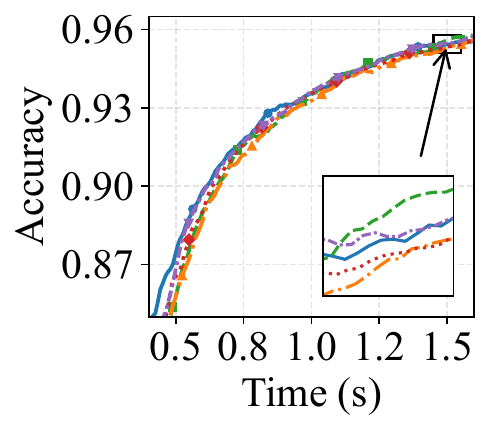}
        \caption{$K=5\times10^3$}
    \end{subfigure}

    \caption{Comparison of different methods on the MNIST dataset under different compression levels ($\alpha = 0.5$).}
    \label{fig:mnist_time_comparison}
\end{figure}

\begin{figure}[t]
    \centering

    \begin{minipage}{\linewidth}
        \centering
        \includegraphics[height=0.8cm]{fig/legend_mnist.pdf}
    \end{minipage}

    \begin{subfigure}[t]{0.32\linewidth}
        \centering
        \includegraphics[width=\linewidth]{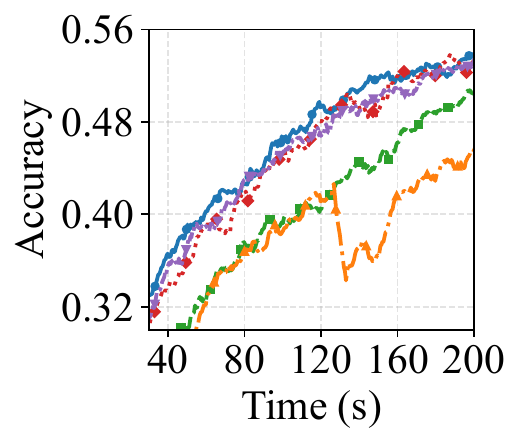}
        \caption{$K = 10^5$}
    \end{subfigure}
    \hfill
    \begin{subfigure}[t]{0.32\linewidth}
        \centering
        \includegraphics[width=\linewidth]{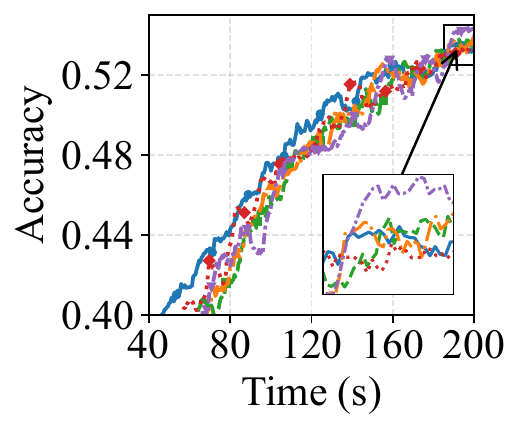}
        \caption{$K=5\times10^5$}
    \end{subfigure}
    \hfill
    \begin{subfigure}[t]{0.32\linewidth}
        \centering
        \includegraphics[width=\linewidth]{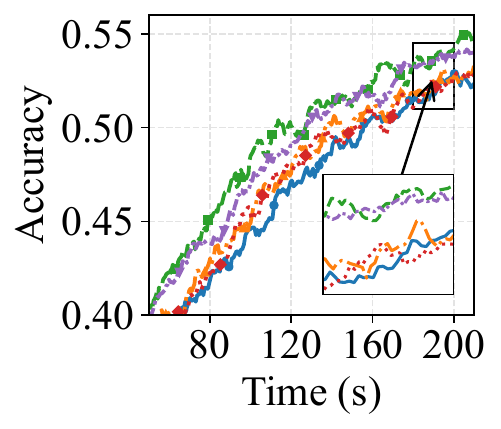}
        \caption{$K=2\times10^6$}
    \end{subfigure}

    \caption{Comparison of different methods on the
    CIFAR-10 dataset under different compression levels ($\alpha = 0.5$).}
    \label{fig:cifar_time_comparison}
\end{figure}

\subsubsection{Performance under Varying Data Heterogeneity}
\label{Performance under Varying Data Heterogeneities}

To further evaluate the robustness of the proposed method under varying degrees of data heterogeneity, we conduct additional experiments on the MNIST dataset, where $\alpha \in \{0.1, 0.5, 1\}$ correspond to highly heterogeneous and nearly IID data distributions, respectively.
In this experiment, the compression level is fixed at $K = 10^{3}$, and the communication bandwidth is set to $50$~Mbps.

As shown in Figs. \ref{fig:mnist_time_comparison} and~\ref{fig:analysis_A},  higher data heterogeneity (e.g., $\alpha = 0.1$) leads to slower convergence and lower final accuracy.
Nevertheless, across all settings, ASCEND consistently achieves competitive or superior performance compared to the best fixed compression strategy.

\begin{figure}[t]
    \centering

    \begin{subfigure}[t]{0.32\linewidth}
        \centering
        \includegraphics[width=\linewidth]{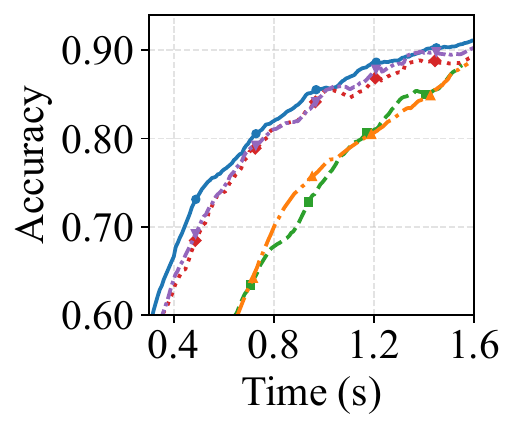}
        \caption{$\alpha = 0.1$}
    \end{subfigure}
    \hfill
    \begin{subfigure}[t]{0.32\linewidth}
        \centering
        \includegraphics[width=\linewidth]{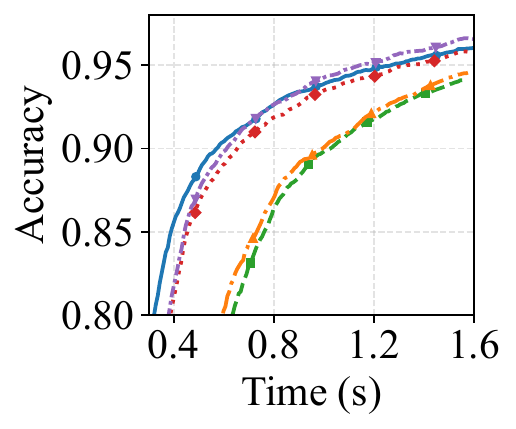}
        \caption{$\alpha = 1$}
    \end{subfigure}
    \hfill
    \begin{minipage}[t]{0.24\linewidth}
        \centering
        \includegraphics[width=\linewidth]{fig/legend_vertical.pdf}
    \end{minipage}

    \caption{Comparison of different methods on the MNIST dataset under different data heterogeneity levels
    ($K = 10^{3}$).}
    \label{fig:analysis_A}
\end{figure}

\subsubsection{Performance under Varying Communication Bandwidths}
\label{Performance under Varying Communication Bandwidths}

To evaluate the robustness of the proposed method under varying communication bandwidths, we conduct additional experiments on the MNIST dataset, where the bandwidth is varied over \{5~Mbps, 50~Mbps, 500~Mbps\}, corresponding to different network conditions.
Here, we set $K = 10^{3}$ and $\alpha = 0.5$.

As shown in Figs.\ref{fig:mnist_time_comparison} and~\ref{fig:Analysis_of_the_bandwidth}, we can observe that a small bandwidth (e.g., 5 Mbps) introduces significant communication delays, under which Top-$K$ tends to perform better. In contrast, higher bandwidth (e.g., 500 Mbps) improves convergence efficiency, making Random-$K$ more competitive.
Nevertheless, across all settings, ASCEND consistently achieves competitive or superior performance, demonstrating its robustness under varying network conditions.

\begin{figure}[t]
    \centering
    \begin{subfigure}[t]{0.32\linewidth}
        \centering
        \includegraphics[width=\linewidth]{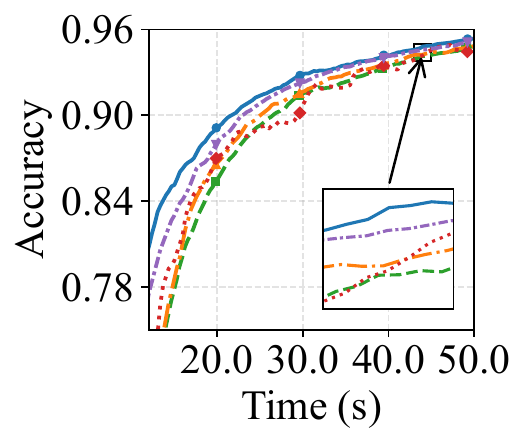}
        \caption{5 \text{Mbps}}
    \end{subfigure}
    \hfill
    \begin{subfigure}[t]{0.32\linewidth}
        \centering
        \includegraphics[width=\linewidth]{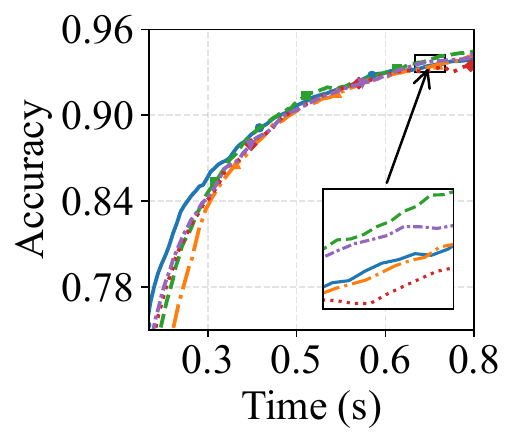}
        \caption{$5\times10^2$ \text{Mbps}}
    \end{subfigure}
    \hfill
    \begin{minipage}[t]{0.24\linewidth}
        \centering
        \includegraphics[width=\linewidth]{fig/legend_vertical.pdf}
    \end{minipage}

    \caption{Comparison of different methods on the MNIST dataset under different communication bandwidths ($\alpha = 0.5$, $K =10^{3}$
    ). }
    \label{fig:Analysis_of_the_bandwidth}
\end{figure}

\subsubsection{Analysis of Client Compression-Strategy Selection}
\label{subsec:client_behavior}

In this section, we analyze the compression strategy selection results of clients  to better understand the selection dynamics of the proposed ASCEND algorithm during the training process.
For the three candidate strategies, i.e., Top-$K$, Random-$K$, and Periodic-$K$, we record the average compression strategy selection counts for different prototypes during training. 

For MNIST with a small $K = 10^{3}$, as illustrated in Fig.~\ref{fig:analysis_behaviors2}, clients equipped with the lightweight LeNet5Half model predominantly select the Top-$K$ strategy, whereas those using the standard LeNet5 model exhibit a more balanced preference between Top-$K$ and Random-$K$. 
This discrepancy can be attributed to differences in computational efficiency across model sizes.
For smaller models such as LeNet5Half, Top-$K$ is more effective since the gradient  dimension is relatively low, making the Top-$K$ selection operation computationally efficient while preserving the most important gradient entries.
In contrast, for larger models such as LeNet5, the cost of identifying Top-$K$ entries becomes higher, and Random-$K$ provides a more efficient alternative with lower selection overhead while maintaining competitive training performance under the same communication constraints. 

In contrast, for a large $K = 5 \times 10^{3}$ on MNIST, as shown in
Fig.~\ref{fig:Analysis of the behaviors3}, both LeNet5Half and LeNet5 clients gradually shift their preference toward the Random-$K$ strategy. At this compression level, more gradient entries can be transmitted, which narrows the performance difference among compression strategies. As a result, the lower selection overhead of Random-$K$ becomes more advantageous, leading to higher computational efficiency during
training.

The compression strategy selection results for CIFAR-10 under smaller and larger $K$ values, i.e., $K=10^5$ and $K=2\times10^6$, are shown in Figs.~\ref{fig:Analysis of the behaviors4} and \ref{fig:Analysis of the behaviors5}, respectively. We can observe the same trend as MNIST.
Overall, these observations indicate that the proposed ASCEND algorithm is capable of autonomously identifying effective compression strategies and adapting its decisions according to model architecture, resource constraints, and compression strategy features.

\begin{figure}[t]
    \centering
    \begin{minipage}{\linewidth}
        \centering
        \includegraphics[height=0.8cm]{fig/legend.pdf}
    \end{minipage}
    \begin{subfigure}[t]{0.48\linewidth}
        \centering
        \includegraphics[width=\linewidth]{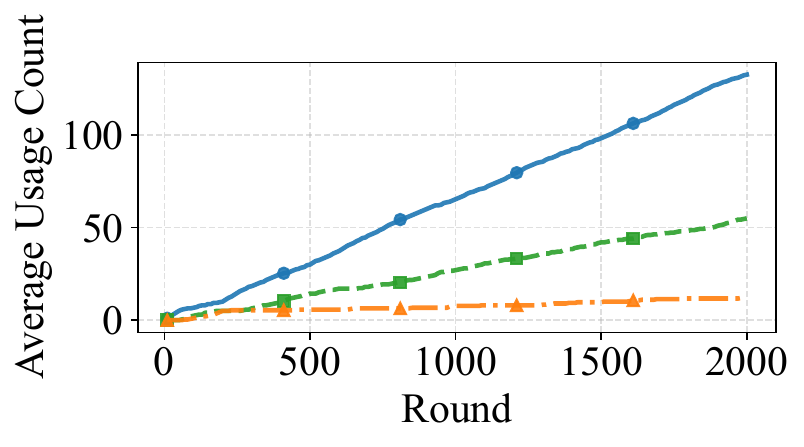}
        \caption{Model prototype = LeNet5Half}
        \label{fig:image1}
    \end{subfigure}
    \hfill
    \begin{subfigure}[t]{0.48\linewidth}
        \centering
        \includegraphics[width=\linewidth]{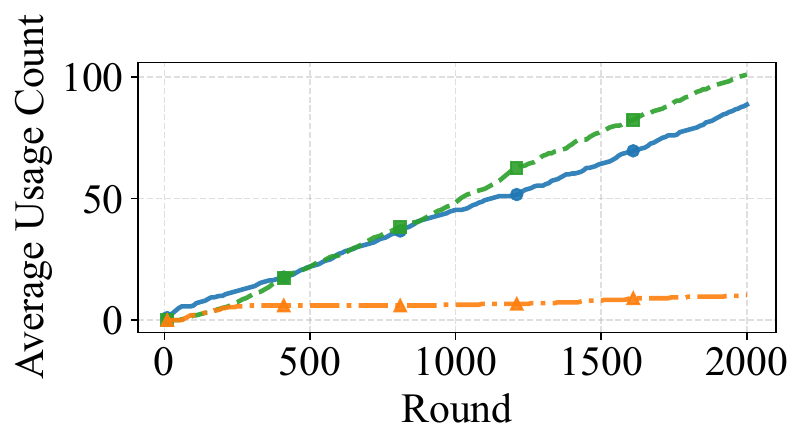}
        \caption{Model prototype = LeNet5}
        \label{fig:image2}
    \end{subfigure}

    \caption{Compression strategy selection results of different model prototypes on MNIST dataset ($\alpha = 0.5$, $K=10^3$).}
    \label{fig:analysis_behaviors2}
\end{figure}

\begin{figure}[t]
    \centering
    \begin{minipage}{\linewidth}
        \centering
        \includegraphics[height=0.8cm]{fig/legend.pdf}
    \end{minipage}
    \begin{subfigure}[t]{0.48\linewidth}
        \centering
        \includegraphics[width=\linewidth]{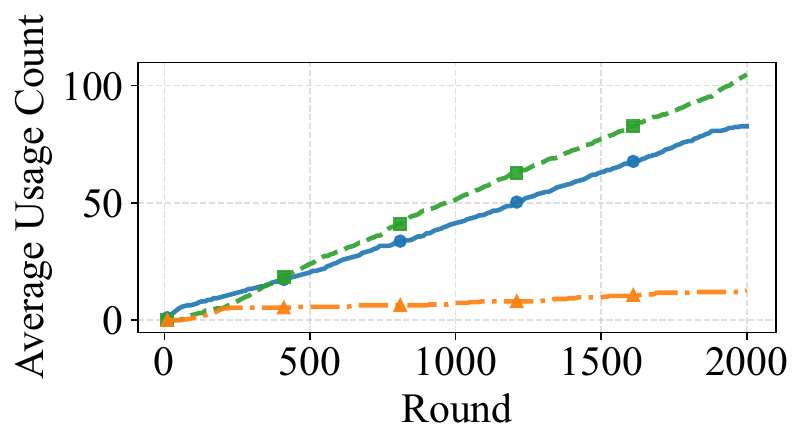}
        \caption{Model prototype = LeNet5Half}
        \label{fig:mnist_024_5000_use}
    \end{subfigure}
    \hfill
    \begin{subfigure}[t]{0.48\linewidth}
        \centering
        \includegraphics[width=\linewidth]{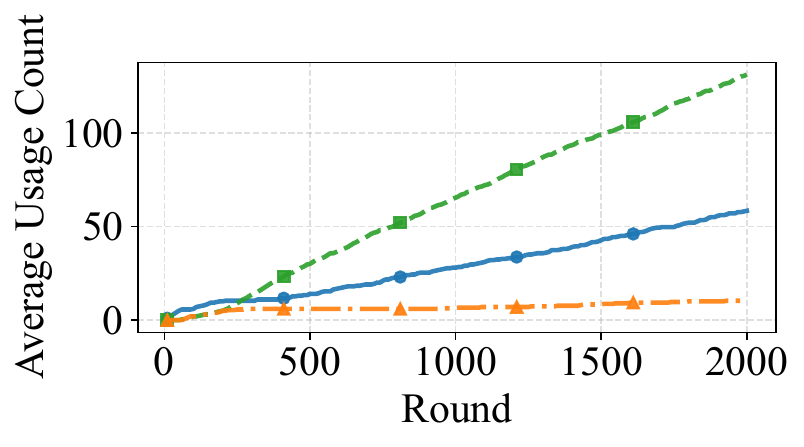}
        \caption{Model prototype = LeNet5}
        \label{fig:mnist_135_5000_use}
    \end{subfigure}
    \caption{Compression strategy selection results of different model prototypes on MNIST dataset ($\alpha = 0.5$, $K=5\times 10^3$).}
    \label{fig:Analysis of the behaviors3}
\end{figure}

\begin{figure}[t]
    \centering
    \begin{minipage}{\linewidth}
        \centering
        \includegraphics[height=0.8cm]{fig/legend.pdf}
    \end{minipage}
    \begin{subfigure}[t]{0.48\linewidth}
        \centering
        \includegraphics[width=\textwidth]{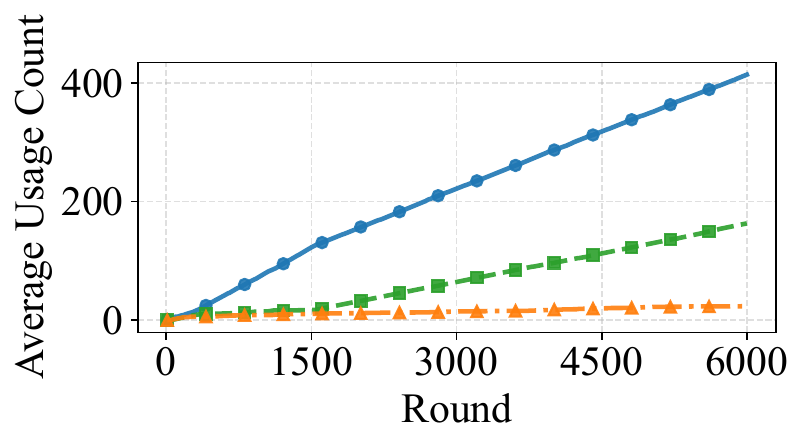}
        \caption{Model prototype = WResNet40-2}
        \label{fig:cifar_135_10000_use}
    \end{subfigure}
    \hfill
    \begin{subfigure}[t]{0.48\linewidth}
        \centering
        \includegraphics[width=\textwidth]{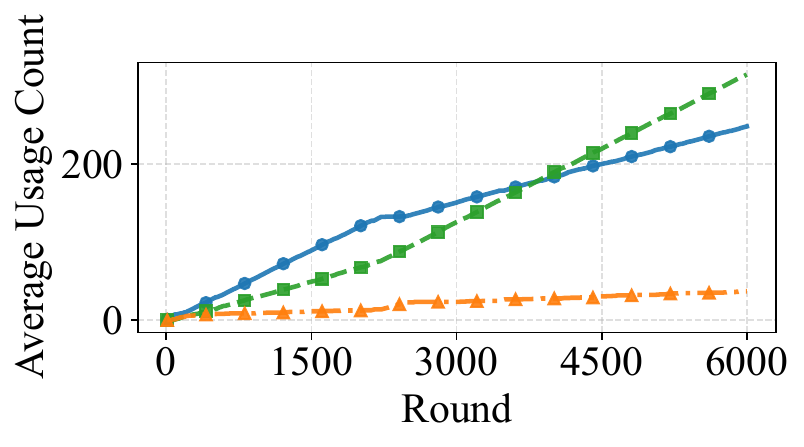}
        \caption{Model prototype = ResNet18}
        \label{fig:cifar_024_10000_use}
    \end{subfigure}
    \caption{Compression strategy selection results of different model prototypes on CIFAR-10 dataset ($\alpha = 0.5$, $K=10^5$).}
    \label{fig:Analysis of the behaviors4}
\end{figure}

\begin{figure}[t]
    \centering
    \begin{minipage}{\linewidth}
        \centering
        \includegraphics[height=0.8cm]{fig/legend.pdf}
    \end{minipage}

    \begin{subfigure}[t]{0.48\linewidth}
        \centering
        \includegraphics[width=\textwidth]{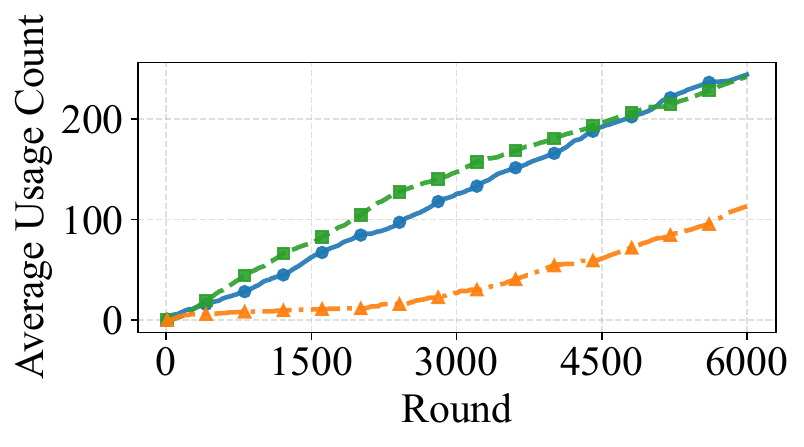}
        \caption{Model prototype = WResNet40-2}
        \label{fig:cifar_135_2000000_use}
    \end{subfigure}
    \hfill
    \begin{subfigure}[t]{0.48\linewidth}
        \centering
        \includegraphics[width=\textwidth]{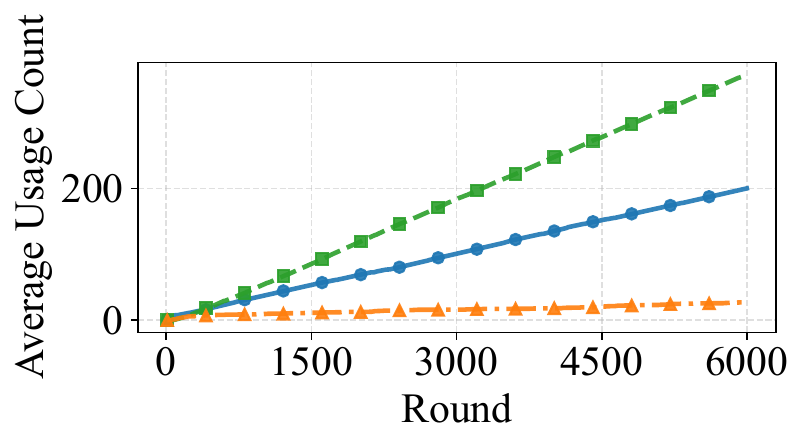}
        \caption{Model prototype = ResNet18}
        \label{fig:cifar_024_2000000_use}
    \end{subfigure}

    \caption{Compression strategy selection results of different model prototypes on CIFAR-10 dataset ($\alpha = 0.5$, $K=2\times 10^6$).}
    \label{fig:Analysis of the behaviors5}
\end{figure}

\subsubsection{Sensitivity Analysis}
As shown in Fig.~\ref{fig:alfa}, the effects of the weighting parameters $\rho$ and $\beta$ in (\ref{eq:deltaL}) are evaluated on different datasets. The parameter setting $\rho=0.6$ and $\beta=0.4$ achieves stable performance across both datasets.

For adaptive strategy selection, the sensitivity of the hyperparameter $c$ in \eqref{eq:maxr} is further investigated. 
As shown in Fig.~\ref{fig:epsilon}, different values of $c$ are evaluated to analyze their impacts on the performance of the proposed method. The results demonstrate that $c=3$ achieves a good balance between exploration and exploitation, providing relatively favorable performance across different datasets.

\begin{figure}[t]
    \centering
    \begin{subfigure}[t]{0.32\linewidth}
        \centering
         \includegraphics[width=\linewidth]{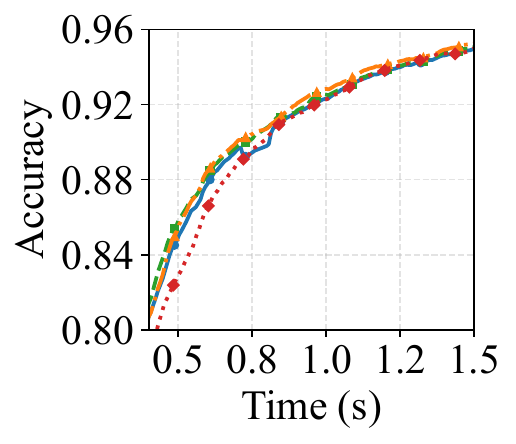}
         \caption{MNIST}
         \label{fig:mnist_alfa}
    \end{subfigure}
    \hfill
    \begin{subfigure}[t]{0.32\linewidth}
        \centering
         \includegraphics[width=\linewidth]{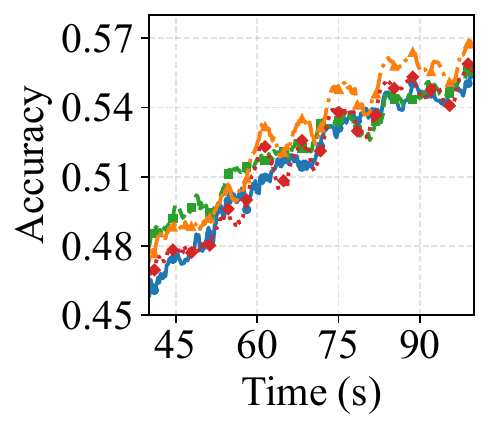}
         \caption{CIFAR-10}
         \label{fig:cifar_alfa}
    \end{subfigure}
    \hfill
    \begin{minipage}[t]{0.24\linewidth}
        \centering
        \includegraphics[width=\linewidth]{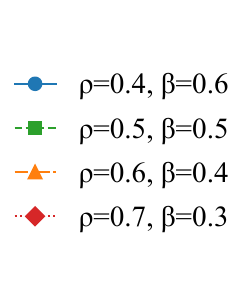}
    \end{minipage}

     \caption{Impact of the weighting parameters $\rho$ and $\beta$ on the performance under different datasets.}
     \label{fig:alfa}
\end{figure}

\begin{figure}[t]
    \centering
    \begin{subfigure}[t]{0.32\linewidth}
        \centering
         \includegraphics[width=\linewidth]{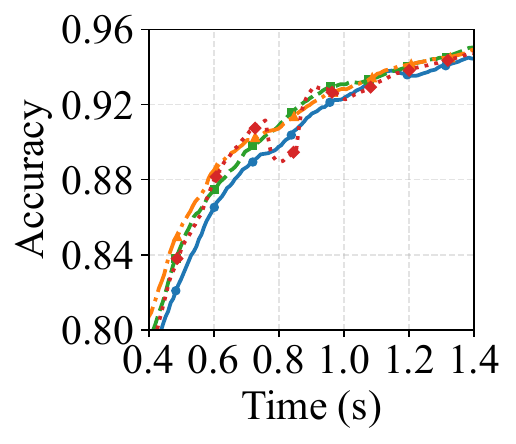}
         \caption{MNIST}
         \label{fig:epsilon_t_mnist}
    \end{subfigure}
    \hfill
    \begin{subfigure}[t]{0.32\linewidth}
        \centering
         \includegraphics[width=\linewidth]{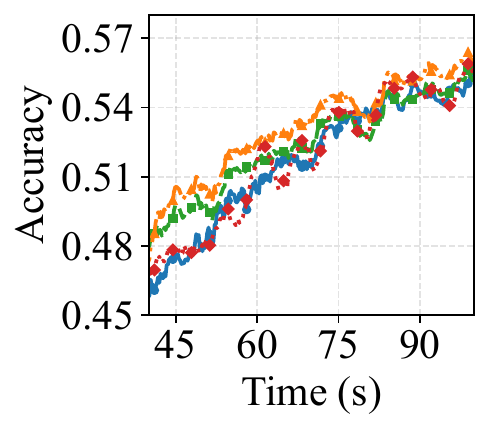}
         \caption{CIFAR-10}
         \label{fig:epsilon_t_cifar}
    \end{subfigure}
    \hfill
    \begin{minipage}[t]{0.24\linewidth}
        \centering
        \includegraphics[width=\linewidth]{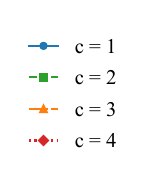}
    \end{minipage}

     \caption{Impact of the hyperparameter $c$ in the probability $\epsilon_t$ on the performance across different datasets.}
     \label{fig:epsilon}
\end{figure}

\section{Conclusion}
This paper proposes ASCEND, an adaptive heterogeneous compression framework for FedKD, where clients with heterogeneous computation and communication resources can dynamically select suitable compression strategies.
The compression strategy selection problem is formulated as a non-stationary stochastic MAB problem.
Each client adaptively chooses among Top-$K$, Random-$K$, and Periodic-$K$ using an EMA-enhanced $\epsilon$-greedy policy with a time-aware reward.
Experimental results on MNIST and CIFAR-10 under non-IID data and different bandwidth capacities demonstrate that ASCEND consistently achieves a favorable balance between model accuracy and training latency.
Future work will investigate the joint optimization of compression strategy selection and compression level adaptation, where the retained gradient budget can be dynamically adjusted according to model performance and system conditions.

\bibliographystyle{IEEEtran}
\bibliography{mybibfilei}

@inproceedings{mcmahan2017communication,
  title={Communication-efficient learning of deep networks from decentralized data},
  author={McMahan, Brendan and Moore, Eider and Ramage, Daniel and Hampson, Seth and y Arcas, Blaise Aguera},
  booktitle={Artificial intelligence and statistics},
  pages={1273--1282},
  year={2017},
  organization={PMLR}
}

@article{li2020federated,
  title={Federated optimization in heterogeneous networks},
  author={Li, Tian and Sahu, Anit Kumar and Zaheer, Manzil and Sanjabi, Maziar and Talwalkar, Ameet and Smith, Virginia},
  journal={Proceedings of Machine learning and systems},
  volume={2},
  pages={429--450},
  year={2020}
}

@article{wen2017terngrad,
  title={Terngrad: Ternary gradients to reduce communication in distributed deep learning},
  author={Wen, Wei and Xu, Cong and Yan, Feng and Wu, Chunpeng and Wang, Yandan and Chen, Yiran and Li, Hai},
  journal={Advances in neural information processing systems},
  volume={30},
  year={2017}
}

@article{alistarh2017qsgd,
  title={QSGD: Communication-efficient SGD via gradient quantization and encoding},
  author={Alistarh, Dan and Grubic, Demjan and Li, Jerry and Tomioka, Ryota and Vojnovic, Milan},
  journal={Advances in neural information processing systems},
  volume={30},
  year={2017}
}

@article{kairouz2021advances,
  title={Advances and open problems in federated learning},
  author={Kairouz, Peter and McMahan, H Brendan and Avent, Brendan and Bellet, Aur{\'e}lien and Bennis, Mehdi and Bhagoji, Arjun Nitin and Bonawitz, Kallista and Charles, Zachary and Cormode, Graham and Cummings, Rachel and others},
  journal={Foundations and trends{\textregistered} in machine learning},
  volume={14},
  number={1--2},
  pages={1--210},
  year={2021},
  publisher={Now Publishers, Inc.}
}

@inproceedings{han2020adaptive,
  title={Adaptive gradient sparsification for efficient federated learning: An online learning approach},
  author={Han, Pengchao and Wang, Shiqiang and Leung, Kin K},
  booktitle={2020 IEEE 40th international conference on distributed computing systems (ICDCS)},
  pages={300--310},
  year={2020},
  organization={IEEE}
}

@article{lin2020ensemble,
  title={Ensemble distillation for robust model fusion in federated learning},
  author={Lin, Tao and Kong, Lingjing and Stich, Sebastian U and Jaggi, Martin},
  journal={Advances in neural information processing systems},
  volume={33},
  pages={2351--2363},
  year={2020}
}

@article{lin2017deep,
  title={Deep gradient compression: Reducing the communication bandwidth for distributed training},
  author={Lin, Yujun and Han, Song and Mao, Huizi and Wang, Yu and Dally, William J},
  journal={arXiv preprint arXiv:1712.01887},
  year={2017}
}

@article{auer2002finite,
  title={Finite-time analysis of the multiarmed bandit problem},
  author={Auer, Peter and Cesa-Bianchi, Nicolo and Fischer, Paul},
  journal={Machine learning},
  volume={47},
  number={2},
  pages={235--256},
  year={2002},
  publisher={Springer}
}

@article{stich2018local,
  title={Local SGD converges fast and communicates little},
  author={Stich, Sebastian U},
  journal={arXiv preprint arXiv:1805.09767},
  year={2018}
}

@article{hinton2015distilling,
  title={Distilling the knowledge in a neural network},
  author={Hinton, Geoffrey and Vinyals, Oriol and Dean, Jeff},
  journal={arXiv preprint arXiv:1503.02531},
  year={2015}
}

@article{aji2017sparse,
  title={Sparse communication for distributed gradient descent},
  author={Aji, Alham Fikri and Heafield, Kenneth},
  journal={arXiv preprint arXiv:1704.05021},
  year={2017}
}

@article{basu2019qsparse,
  title={Qsparse-local-SGD: Distributed SGD with quantization, sparsification and local computations},
  author={Basu, Debraj and Data, Deepesh and Karakus, Can and Diggavi, Suhas},
  journal={Advances in Neural Information Processing Systems},
  volume={32},
  year={2019}
}

@inproceedings{haddadpour2021federated,
  title={Federated learning with compression: Unified analysis and sharp guarantees},
  author={Haddadpour, Farzin and Kamani, Mohammad Mahdi and Mokhtari, Aryan and Mahdavi, Mehrdad},
  booktitle={International Conference on Artificial Intelligence and Statistics},
  pages={2350--2358},
  year={2021},
  organization={PMLR}
}

@article{wu2022communication,
  title={Communication-efficient federated learning via knowledge distillation},
  author={Wu, Chuhan and Wu, Fangzhao and Lyu, Lingjuan and Huang, Yongfeng and Xie, Xing},
  journal={Nature communications},
  volume={13},
  number={1},
  pages={2032},
  year={2022},
  publisher={Nature Publishing Group UK London}
}

@article{li2019fedmd,
  title={Fedmd: Heterogenous federated learning via model distillation},
  author={Li, Daliang and Wang, Junpu},
  journal={arXiv preprint arXiv:1910.03581},
  year={2019}
}

@inproceedings{karimireddy2020scaffold,
  title={Scaffold: Stochastic controlled averaging for federated learning},
  author={Karimireddy, Sai Praneeth and Kale, Satyen and Mohri, Mehryar and Reddi, Sashank and Stich, Sebastian and Suresh, Ananda Theertha},
  booktitle={International conference on machine learning},
  pages={5132--5143},
  year={2020},
  organization={PMLR}
}

@article{lecun2002gradient,
  title={Gradient-based learning applied to document recognition},
  author={LeCun, Yann and Bottou, L{\'e}on and Bengio, Yoshua and Haffner, Patrick},
  journal={Proceedings of the IEEE},
  volume={86},
  number={11},
  pages={2278--2324},
  year={2002},
  publisher={Ieee}
}

@techreport{krizhevsky2009learning,
  title       = {Learning multiple layers of features from tiny images},
  author      = {Krizhevsky, Alex},
  institution = {Department of Computer Science, University of Toronto},
  address     = {Toronto, ON, Canada},
  type        = {Tech. {Rep.}},
  year        = {2009}
}

@article{zhang2021parameter,
  title={Parameterized Knowledge Transfer for Personalized Federated Learning},
  author={Zhang, Jie and Guo, Song and Ma, Xiaosong and Wang, Haozhao and Xu, Wencao and Wu, Feijie},
  journal={arXiv preprint arXiv:2111.02862},
  year={2021}
}

@article{li2019convergence,
  title={On the convergence of fedavg on non-iid data},
  author={Li, Xiang and Huang, Kaixuan and Yang, Wenhao and Wang, Shusen and Zhang, Zhihua},
  journal={arXiv preprint arXiv:1907.02189},
  year={2019}
}

@inproceedings{shi2019convergence,
  title={A convergence analysis of distributed SGD with communication-efficient gradient sparsification.},
  author={Shi, Shaohuai and Zhao, Kaiyong and Wang, Qiang and Tang, Zhenheng and Chu, Xiaowen},
  booktitle={IJCAI},
  pages={3411--3417},
  year={2019}
}

@inproceedings{xu2022detached,
  title={Detached error feedback for distributed SGD with random sparsification},
  author={Xu, An and Huang, Heng},
  booktitle={International conference on machine learning},
  pages={24550--24575},
  year={2022},
  organization={PMLR}
}

@inproceedings{yin2018byzantine,
  title={Byzantine-robust distributed learning: Towards optimal statistical rates},
  author={Yin, Dong and Chen, Yudong and Kannan, Ramchandran and Bartlett, Peter},
  booktitle={International conference on machine learning},
  pages={5650--5659},
  year={2018},
  organization={Pmlr}
}

@ARTICLE{11372971,
  author={Han, Pengchao and Yin, Zhenshuai and Liu, Chang and Fang, Yi},
  journal={IEEE Transactions on Cognitive Communications and Networking}, 
  title={Agentic {AI}-Driven Federated Feature Distillation for Adaptive Resource–Performance Tradeoffs in Wireless Edge Networks}, 
  year={2026},
  volume={12},
  number={},
  pages={6076-6088},
  doi={10.1109/TCCN.2026.3661503}}

@ARTICLE{10606337,
  author={Han, Pengchao and Shi, Xingyan and Huang, Jianwei},
  journal={IEEE Journal on Selected Areas in Communications}, 
  title={FedAL: Black-Box Federated Knowledge Distillation Enabled by Adversarial Learning}, 
  year={2024},
  volume={42},
  number={11},
  pages={3064-3077},
  doi={10.1109/JSAC.2024.3431516}}

@book{sutton2018reinforcement,
  title={Reinforcement Learning: An Introduction},
  author={Sutton, Richard S. and Barto, Andrew G.},
  edition={Second},
  publisher={MIT Press},
  address={Cambridge, MA},
  year={2018}
}

@article{tan2022fedproto,
  title={FedProto: Federated Prototype Learning across Heterogeneous Clients},
  author={Tan, Yue and Long, Guodong and Liu, Lu and Zhou, Tianyi and Jiang, Jing and Zhang, Chengqi},
  journal={Proceedings of the AAAI Conference on Artificial Intelligence},
  volume={36},
  number={8},
  pages={8432--8440},
  year={2022}
}

@article{zhang2023fedtgp,
  title={FedTGP: Federated Learning with Global Prototypes for Heterogeneous Models},
  author={Zhang, Jianqing and Chen, Yang and Wu, Di and Li, Bo},
  journal={IEEE Transactions on Mobile Computing},
  year={2023}
}

@article{han2025rethinking,
  title={Rethinking Knowledge Distillation in Collaborative Machine Learning: Memory, Knowledge, and Their Interactions},
  author={Han, Pengchao and Huang, Xi and Fang, Yi and Han, Guojun},
  journal={IEEE Transactions on Network Science and Engineering},
  volume={12},
  number={6},
  pages={4498--4530},
  year={2025},
  publisher={IEEE},
  doi={10.1109/TNSE.2025.3572362}
}

@inproceedings{chen2018adacomp,
  title={AdaComp: Adaptive Residual Gradient Compression for Data-Parallel Distributed Training},
  author={Chen, Chia-Yu and Choi, Jungwook and Brand, Daniel and Agrawal, Ankur and Zhang, Wei and Gopalakrishnan, Kailash},
  booktitle={Proceedings of the AAAI Conference on Artificial Intelligence},
  volume={32},
  number={1},
  year={2018}
}

@inproceedings{koloskova2020unified,
  title     = {A Unified Theory of Decentralized {SGD} with Changing Topology and Local Updates},
  author    = {Koloskova, Anastasia and Loizou, Nicolas and Boreiri, Sadra and Jaggi, Martin and Stich, Sebastian U.},
  booktitle = {Proceedings of the 37th International Conference on Machine Learning},
  pages     = {5381--5393},
  year      = {2020},
  volume    = {119},
  series    = {Proceedings of Machine Learning Research},
  publisher = {PMLR}
}

@inproceedings{woodworth2020minibatch,
  title     = {Minibatch vs Local {SGD} for Heterogeneous Distributed Learning},
  author    = {Woodworth, Blake E. and Patel, Kumar Kshitij and Srebro, Nathan},
  booktitle = {Advances in Neural Information Processing Systems},
  volume    = {33},
  year      = {2020}
}

@inproceedings{karimireddy2019errorfeedback,
  title     = {Error Feedback Fixes {SignSGD} and Other Gradient Compression Schemes},
  author    = {Karimireddy, Sai Praneeth and Rebjock, Quentin and Stich, Sebastian U. and Jaggi, Martin},
  booktitle = {Proceedings of the 36th International Conference on Machine Learning},
  pages     = {3252--3261},
  year      = {2019},
  volume    = {97},
  series    = {Proceedings of Machine Learning Research},
  publisher = {PMLR}
}

@inproceedings{richtarik2021ef21,
  title     = {{EF21}: A New, Simpler, Theoretically Better, and Practically Faster Error Feedback},
  author    = {Richt{\'a}rik, Peter and Sokolov, Igor and Fatkhullin, Ilyas},
  booktitle = {Advances in Neural Information Processing Systems},
  volume    = {34},
  year      = {2021}
}

@inproceedings{garivier2011switching,
  title     = {On Upper-Confidence Bound Policies for Switching Bandit Problems},
  author    = {Garivier, Aur{\'e}lien and Moulines, Eric},
  booktitle = {Algorithmic Learning Theory},
  pages     = {174--188},
  year      = {2011},
  publisher = {Springer},
  series    = {Lecture Notes in Computer Science},
  volume    = {6925},
  doi       = {10.1007/978-3-642-24412-4_16}
}

@inproceedings{liu2018change,
  title     = {A Change-Detection Based Framework for Piecewise-Stationary Multi-Armed Bandit Problem},
  author    = {Liu, Fang and Lee, Joohyun and Shroff, Ness},
  booktitle = {Proceedings of the AAAI Conference on Artificial Intelligence},
  volume    = {32},
  number    = {1},
  year      = {2018},
  doi       = {10.1609/aaai.v32i1.11746}
}

@inproceedings{zhu2021fedgen,
  title={Federated learning with heterogeneous data: A generative approach},
  author={Zhu, Zhuangdi and Hong, Junyuan and Zhou, Jiayu},
  booktitle={ICML},
  pages={12804--12814},
  year={2021},
  organization={PMLR}
}
\clearpage

\appendix
To facilitate understanding in the theoretical analysis, we first present the overall workflow of a single communication round in Figure~\ref{fig:training_pipeline}.
\begin{figure}[t]
\centering
\includegraphics[width=0.95\linewidth]{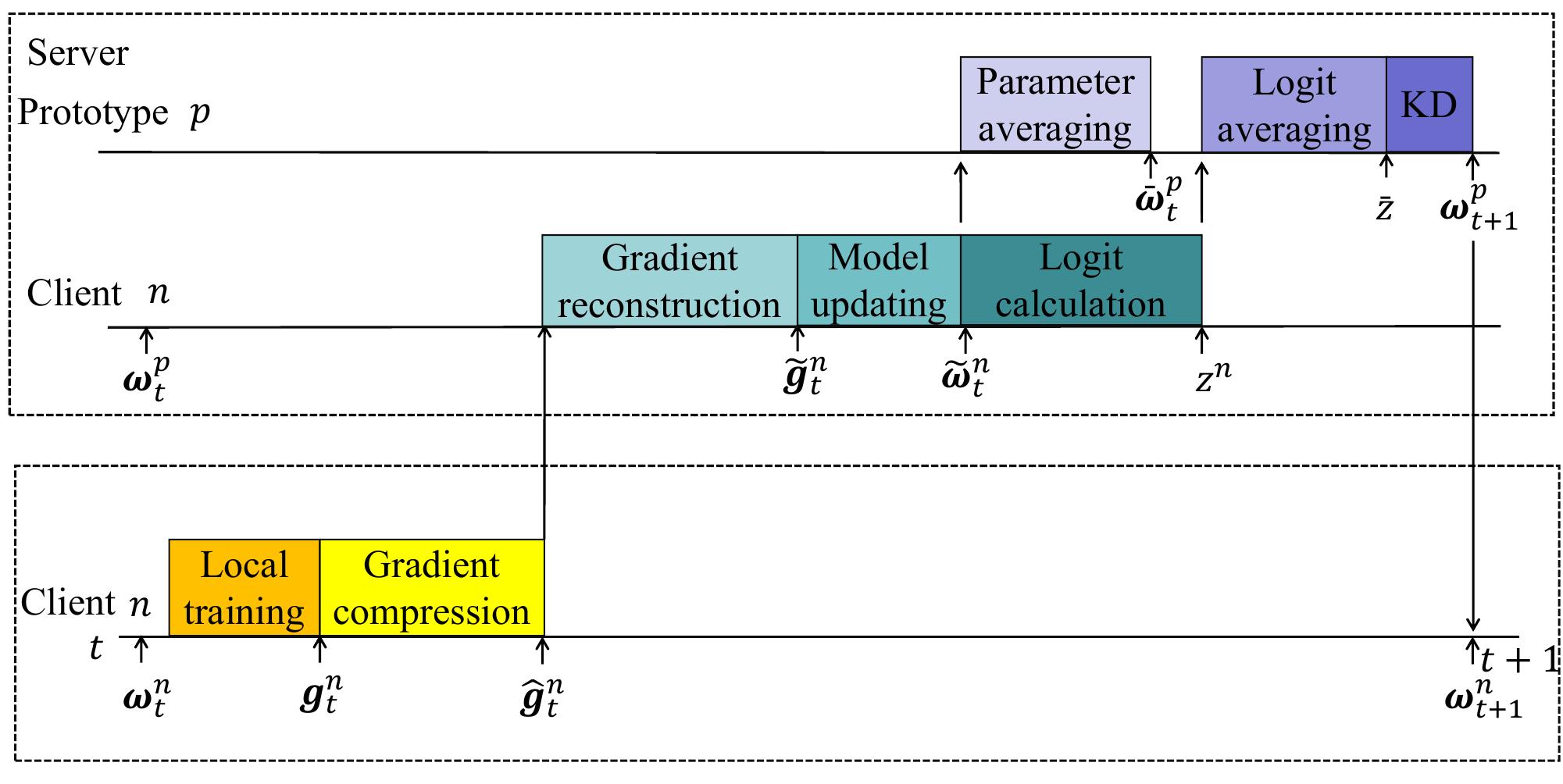}
\caption{The pipeline of a single communication round in FedKD with heterogeneous compression.}
\label{fig:training_pipeline}
\end{figure}

\subsection{Proof of Theorem \ref{thm:nonconvex_convergence} }
We rearrange the training process from the perspective of each prototype. Specifically, at the beginning of training round $t$, the parameter of prototype $p$ is $\boldsymbol{\omega}_t^p$. 
The server broadcasts the updated prototype model after each KD step, all clients belonging to the same prototype start each communication round with identical model parameters, i.e.,
\begin{equation}
\boldsymbol{\omega}_t^n
=
\boldsymbol{\omega}_t^p,
\quad
\forall n\in\mathcal{N}_p .
\label{eq:same_prototype_initial}
\end{equation}
For local training, we define the local optimization objective for prototype $p$ as
\begin{equation}
f_p(\boldsymbol{\omega})
=
\frac{1}{|\mathcal{N}_p|}
\sum_{n\in\mathcal{N}_p}
f_n(\boldsymbol{\omega}),
\label{eq:prototype_objective}
\end{equation}
where $f_p(\boldsymbol{\omega})$ denotes the prototype-wise empirical loss.

Based on the received compressed gradients from client $n$ in round $t$, the server reconstructs the corresponding gradient vector $\tilde{\boldsymbol g}_t^n$.
The reconstructed gradient is used to update the local model of client $n$ by
\begin{equation}
\tilde{\boldsymbol{\omega}}_t^n
=
\boldsymbol{\omega}_t^p
-
\eta_t
\tilde{\boldsymbol g}_t^n.
\label{eq:local_update_theory}
\end{equation}
The server performs parameter-averaging for prototype aggregation as formulated in \eqref{eq:parameter_averaging}.
Substituting \eqref{eq:local_update_theory} into
\eqref{eq:parameter_averaging}, the prototype update can be equivalently written as
\begin{equation}
\bar{\boldsymbol{\omega}}_t^p
=
\boldsymbol{\omega}_t^p
-
\eta_t
\bar{\boldsymbol g}_t^p ,
\label{eq:prototype_gradient_form}
\end{equation}
where the aggregated reconstructed gradient is defined as
\begin{equation}
\bar{\boldsymbol g}_t^p
=
\frac{1}{|\mathcal{N}_p|}
\sum_{n\in\mathcal{N}_p}
\tilde{\boldsymbol g}_t^n .
\label{eq:prototype_gradient_aggregation}
\end{equation}

After prototype aggregation, the server performs one-step KD on the public dataset $\mathcal{D}_{\mathrm{pub}}$.
The KD update is formulated as

\begin{equation}
\boldsymbol{\omega}_{t+1}^{p}
=
\bar{\boldsymbol{\omega}}_t^p
-
\eta_{KD}
\nabla
\mathcal{L}_{KD}^{p}
(\bar{\boldsymbol{\omega}}_t^p),
\label{eq:kd_update}
\end{equation}
where $\eta_{KD}$ denotes the KD learning rate.

For theoretical analysis, we define the joint optimization objective of prototype $p$ as
\begin{equation}
F_p(\boldsymbol{\omega}^p_t)
=
f_p(\boldsymbol{\omega}^p_t)
+
\lambda
\mathcal{L}_{KD}^{p}
(\boldsymbol{\omega}^p_t),
\label{eq:joint_objective}
\end{equation}
where $\lambda$ balances the local optimization objective and the global knowledge alignment objective.

Define $\eta_{KD}\triangleq \lambda\eta_t$
and substituting \eqref{eq:prototype_gradient_form}  into \eqref{eq:kd_update} yields
\begin{equation}
\boldsymbol{\omega}_{t+1}^{p}
=
\boldsymbol{\omega}_t^p
-
\eta_t
\left(
\bar{\boldsymbol g}_t^p
+
\lambda
\nabla
\mathcal{L}_{KD}^{p}
(\bar{\boldsymbol{\omega}}_t^p)
\right).
\label{eq:combined_update}
\end{equation}

For convenience, define the aggregated residual and stochastic gradient noise
for prototype $p$ as
\begin{equation}
\boldsymbol e_t^p
=
\frac{1}{|\mathcal N_p|}
\sum_{n\in\mathcal N_p}
\boldsymbol e_t^n ,
\end{equation}
and
\begin{equation}
\boldsymbol\xi_t^p
=
\frac{1}{|\mathcal N_p|}
\sum_{n\in\mathcal N_p}
(
\boldsymbol g_t^n
-
\nabla f_n(\boldsymbol\omega_t^p)
).
\end{equation}
Applying Assumption~\ref{asm:Unbiased Stochastic Gradient} and~\ref{asm:Bounded Stochastic Gradient Variance},
the stochastic gradient noise satisfies
\begin{equation}
\mathbb E
[
\boldsymbol\xi_t^p
|
\boldsymbol\omega_t^p
]
=
0,
\end{equation}
and
\begin{equation}
\mathbb E[
\|
\boldsymbol\xi_t^p
\|^2]
\leq
\sigma^2 .
\end{equation}

We first propose the following lemma to show the smoothness of $F_p(\boldsymbol{\omega})$.

\begin{lem}[Smoothness of the Joint Objective]
\label{lem:Smoothness-fp}
The joint objective function $F_p(\boldsymbol{\omega})$
defined in \eqref{eq:joint_objective} is $L$-smooth. Specifically, there exists a constant
$L=L_l+\lambda L_p>0$ such that
\begin{equation}
\left\|
\nabla F_p(\boldsymbol{\omega})
-
\nabla F_p(\boldsymbol{\omega}')
\right\|
\leq
L
\left\|
\boldsymbol{\omega}
-
\boldsymbol{\omega}'
\right\|,
\quad
\forall
\boldsymbol{\omega},\boldsymbol{\omega}' .
\label{eq:smoothnessF_p}
\end{equation}
\end{lem}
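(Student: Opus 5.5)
The plan is to prove the smoothness of $F_p$ directly from its definition \eqref{eq:joint_objective}, using linearity of the gradient and the triangle inequality. Differentiating gives
\begin{equation*}
\nabla F_p(\boldsymbol{\omega})
=
\frac{1}{|\mathcal{N}_p|}\sum_{n\in\mathcal{N}_p}\nabla f_n(\boldsymbol{\omega})
+
\lambda\nabla\mathcal{L}_{KD}^{p}(\boldsymbol{\omega}),
\end{equation*}
by \eqref{eq:prototype_objective}. The same expression holds at $\boldsymbol{\omega}'$.

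The first step treats the prototype-wise empirical loss $f_p$. Subtracting the two gradients and applying the triangle inequality to the average gives
\begin{equation*}
\left\|\nabla f_p(\boldsymbol{\omega})-\nabla f_p(\boldsymbol{\omega}')\right\|
\leq
\frac{1}{|\mathcal{N}_p|}\sum_{n\in\mathcal{N}_p}\left\|\nabla f_n(\boldsymbol{\omega})-\nabla f_n(\boldsymbol{\omega}')\right\|.
\end{equation*}
Each summand is at most $L_l\|\boldsymbol{\omega}-\boldsymbol{\omega}'\|$ by Assumption~\ref{asm:Smoothness-l}. So $f_p$ is $L_l$-smooth: a convex combination of $L_l$-smooth functions is $L_l$-smooth.

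The second step handles the KD term. By Assumption~\ref{asm:Smoothness-p}, $\|\lambda\nabla\mathcal{L}_{KD}^p(\boldsymbol{\omega})-\lambda\nabla\mathcal{L}_{KD}^p(\boldsymbol{\omega}')\|\leq\lambda L_p\|\boldsymbol{\omega}-\boldsymbol{\omega}'\|$. This uses $\lambda\geq 0$, which I take as implicit since $\lambda$ is a nonnegative balancing weight; otherwise one replaces $\lambda$ by $|\lambda|$.

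Finally, one more triangle inequality on $\nabla F_p(\boldsymbol{\omega})-\nabla F_p(\boldsymbol{\omega}')$ splits it into the $f_p$ part and the KD part. Adding the two bounds gives \eqref{eq:smoothnessF_p} with $L=L_l+\lambda L_p$, which is positive since $L_l>0$.

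There is no real obstacle here. The only point needing care is a domain subtlety: all $f_n$ with $n\in\mathcal{N}_p$ and $\mathcal{L}_{KD}^p$ must be regarded as functions on the same parameter space of prototype $p$. This is guaranteed because clients in $\mathcal{N}_p$ share architecture $p$, so the sum defining $F_p$ is well posed.
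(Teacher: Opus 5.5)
Your proof is correct and follows the paper's argument. Like the paper, you split $\nabla F_p(\boldsymbol{\omega})-\nabla F_p(\boldsymbol{\omega}')$ with the triangle inequality and bound the two pieces using Assumptions~\ref{asm:Smoothness-l} and~\ref{asm:Smoothness-p}. You also make explicit two points the paper uses tacitly: the average $f_p$ inherits $L_l$-smoothness from the $f_n$, and $\lambda\ge 0$ is needed.
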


\begin{proof}
According to the definition of the joint optimization objective in
\eqref{eq:joint_objective}, therefore, for any two model parameters
$\boldsymbol{\omega}$ and $\boldsymbol{\omega}'$, the gradient difference can be written as
\begin{equation}
\begin{aligned}
&
\nabla F_p(\boldsymbol{\omega})
-
\nabla F_p(\boldsymbol{\omega}')
\\
&=
\nabla f_p(\boldsymbol{\omega})
-
\nabla f_p(\boldsymbol{\omega}')
+
\lambda
\left(
\nabla \mathcal{L}_{KD}^{p}(\boldsymbol{\omega})
-
\nabla \mathcal{L}_{KD}^{p}(\boldsymbol{\omega}')
\right).
\end{aligned}
\end{equation}
Taking the norm on both sides and applying the triangle inequality, we have
\begin{equation}
\begin{aligned}
&
\left\|
\nabla F_p(\boldsymbol{\omega})
-
\nabla F_p(\boldsymbol{\omega}')
\right\|
\\
&\leq
\left\|
\nabla f_p(\boldsymbol{\omega})
-
\nabla f_p(\boldsymbol{\omega}')
\right\|
\\
&\quad
+
\lambda
\left\|
\nabla \mathcal{L}_{KD}^{p}(\boldsymbol{\omega})
-
\nabla \mathcal{L}_{KD}^{p}(\boldsymbol{\omega}')
\right\|.
\end{aligned}
\end{equation}
Applying Assumption~\ref{asm:Smoothness-l} and Assumption~\ref{asm:Smoothness-p}  into the gradient difference bound yields
\begin{equation}
\begin{aligned}
&
\left\|
\nabla F_p(\boldsymbol{\omega})
-
\nabla F_p(\boldsymbol{\omega}')
\right\|
\\
&\leq
L_l
\left\|
\boldsymbol{\omega}
-
\boldsymbol{\omega}'
\right\|
+
\lambda L_p
\left\|
\boldsymbol{\omega}
-
\boldsymbol{\omega}'
\right\|
\\
&=
(L_l+\lambda L_p)
\left\|
\boldsymbol{\omega}
-
\boldsymbol{\omega}'
\right\|.
\end{aligned}
\end{equation}
By defining the smoothness constant as
\begin{equation}
\label{eq:L:}
L=L_l+\lambda L_p,
\end{equation}
we obtain
\begin{equation}
\left\|
\nabla F_p(\boldsymbol{\omega})
-
\nabla F_p(\boldsymbol{\omega}')
\right\|
\leq
L
\left\|
\boldsymbol{\omega}
-
\boldsymbol{\omega}'
\right\|,
\end{equation}
which completes the proof.
\end{proof}

We then propose the following lemma to capture the deviation in each round $t$.
\begin{lem}[Error-feedback Decomposition under Heterogeneous Compression]
\label{lem:gradient_deviation}
For each prototype $p$, 
the deviation between the aggregated reconstructed gradient $\bar{\boldsymbol g}_t^p$ and the true prototype gradient satisfies
\begin{equation}
\bar{\boldsymbol g}_t^p
-
\nabla f_p(\boldsymbol{\omega}_t^p)
=
\boldsymbol e_{t-1}^{p}
-
\boldsymbol e_t^{p}
+
\boldsymbol{\xi}_t^{p}.
\label{eq:ef_gradient_decomposition}
\end{equation}
\end{lem}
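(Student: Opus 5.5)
This identity is purely algebraic and follows from the error-feedback recursion, so the plan is to derive it client by client and then average over $\mathcal N_p$. We fix a client $n\in\mathcal N_p$ and round $t$. By the server reconstruction \eqref{eq:server_tecovery}, $\tilde{\boldsymbol g}_t^n=\mathcal B(\hat{\boldsymbol g}_t^n,\boldsymbol I_t^n)$. The residual update \eqref{eq:residual_update_III} then gives $\mathcal B(\hat{\boldsymbol g}_t^n,\boldsymbol I_t^n)=\boldsymbol u_t^n-\boldsymbol e_t^n$. Substituting the definition of the augmented vector \eqref{eq:residual_addition_III} yields
\begin{equation*}
\tilde{\boldsymbol g}_t^n=\boldsymbol e_{t-1}^n+\boldsymbol g_t^n-\boldsymbol e_t^n .
\end{equation*}
This holds regardless of which strategy $s_t^n\in\mathcal S$ the client uses. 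That is the point of the lemma under heterogeneous compression: the choice of compressor only affects the value of $\boldsymbol e_t^n$, never the form of the identity.

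Next we subtract $\nabla f_n(\boldsymbol\omega_t^p)$ from both sides. By \eqref{eq:same_prototype_initial}, every client in $\mathcal N_p$ computes its gradient at the common point $\boldsymbol\omega_t^p$. Hence
\begin{equation*}
\tilde{\boldsymbol g}_t^n-\nabla f_n(\boldsymbol\omega_t^p)=\boldsymbol e_{t-1}^n-\boldsymbol e_t^n+\bigl(\boldsymbol g_t^n-\nabla f_n(\boldsymbol\omega_t^p)\bigr).
\end{equation*}

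We then average over $n\in\mathcal N_p$ with weights $1/|\mathcal N_p|$ and use linearity. The left-hand side becomes $\bar{\boldsymbol g}_t^p-\nabla f_p(\boldsymbol\omega_t^p)$, by \eqref{eq:prototype_gradient_aggregation} and \eqref{eq:prototype_objective}, since the gradient of an average is the average of gradients. On the right, the residual terms average to $\boldsymbol e_{t-1}^p-\boldsymbol e_t^p$ by the definition of the aggregated residual, and the noise terms average to $\boldsymbol\xi_t^p$ by its definition. This gives \eqref{eq:ef_gradient_decomposition}.

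There is no genuine obstacle. The only points needing care are bookkeeping ones:
\begin{itemize}
\item the reconstruction operator $\mathcal B$ must be applied consistently on the client side and the server side, so the two occurrences of $\mathcal B(\hat{\boldsymbol g}_t^n,\boldsymbol I_t^n)$ coincide;
\item all clients must evaluate their gradients at the same prototype point, which rollback does not break because rollback restores identical parameters across the prototype;
\item the convention $\boldsymbol e_0^n=\boldsymbol 0$ should be noted, so that the identity also makes sense at the first round.
\end{itemize}
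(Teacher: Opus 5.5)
Your proposal is correct and follows the paper's proof essentially step for step. Like the paper, you rewrite the residual update as $\tilde{\boldsymbol g}_t^n=\boldsymbol e_{t-1}^n+\boldsymbol g_t^n-\boldsymbol e_t^n$, add and subtract $\nabla f_n(\boldsymbol\omega_t^p)$, and average over $\mathcal N_p$ using linearity. One small refinement: the identity is purely algebraic once $\boldsymbol\xi_t^p$ is defined with $\nabla f_n(\boldsymbol\omega_t^p)$, so the common evaluation point \eqref{eq:same_prototype_initial} is not needed for the lemma itself. It matters only later, when $\boldsymbol\xi_t^p$ is treated as zero-mean noise.
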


\begin{prf}
According to the error-feedback update,
the residual vector at client $n$ satisfies
\begin{equation}
\boldsymbol e_t^n
=
\boldsymbol u_t^n
-
\tilde{\boldsymbol g}_t^n.
\label{eq:ef_residual_definition}
\end{equation}
Using \eqref{eq:residual_addition_III},
therefore,
\begin{equation}
\tilde{\boldsymbol g}_t^n
-
\boldsymbol g_t^n
=
\boldsymbol e_{t-1}^n
-
\boldsymbol e_t^n .
\label{eq:ef_telescoping_identity}
\end{equation}
By adding and subtracting
$\nabla f_n(\boldsymbol{\omega}_t^p)$,
we obtain
\begin{align}
\tilde{\boldsymbol g}_t^n
-
\nabla f_n(\boldsymbol{\omega}_t^p)
=&
\boldsymbol e_{t-1}^n
-
\boldsymbol e_t^n
\nonumber\\
&
+
\left(
\boldsymbol g_t^n
-
\nabla f_n(\boldsymbol{\omega}_t^p)
\right).
\label{eq:local_ef_decomposition}
\end{align}
Taking the average over all clients belonging to prototype $p$ gives
\begin{align}
&
\bar{\boldsymbol g}_t^p
-
\nabla f_p(\boldsymbol{\omega}_t^p)
\nonumber\\
=&
\boldsymbol e_{t-1}^{p}
-
\boldsymbol e_t^{p}
+
\boldsymbol{\xi}_t^p ,
\end{align}
which completes the proof.
\end{prf}

Now, we prove Theorem \ref{thm:nonconvex_convergence}.
\begin{prf}
According to Lemma \ref{lem:Smoothness-fp}, the joint objective function
$F_p(\boldsymbol{\omega})$
is $L$-smooth. Therefore, 
we obtain
\begin{align}
F_p(\boldsymbol{\omega}_{t+1}^{p})
\leq&
F_p(\boldsymbol{\omega}_{t}^{p})
+
\left\langle
\nabla F_p(\boldsymbol{\omega}_{t}^{p}),
\boldsymbol{\omega}_{t+1}^{p}
-
\boldsymbol{\omega}_{t}^{p}
\right\rangle
\nonumber\\
&
+
\frac{L}{2}
\|
\boldsymbol{\omega}_{t+1}^{p}
-
\boldsymbol{\omega}_{t}^{p}
\|^2 .
\label{eq:smoothness_tound}
\end{align}

According to the prototype update rule after parameter aggregation and KD, the prototype model is updated following \eqref{eq:combined_update}.

The definition of the aggregated gradient deviation,
\begin{equation}
\boldsymbol{\delta}_{t}^{p}
=
\bar{\boldsymbol g}_{t}^{p}
-
\nabla f_p(\boldsymbol{\omega}_{t}^{p}).
\label{eq:delta_definition_theorem}
\end{equation}
According to Lemma~\ref{lem:gradient_deviation},
the aggregated gradient deviation can be written as
\begin{equation}
\boldsymbol{\delta}_t^p
=
\boldsymbol e_{t-1}^{p}
-
\boldsymbol e_t^{p}
+
\boldsymbol{\xi}_t^p .
\label{eq:new_delta_definition}
\end{equation}

From \eqref{eq:joint_objective} and \eqref{eq:delta_definition_theorem}, the update direction can be rewritten as
\begin{equation}
\bar{\boldsymbol g}_{t}^{p}
+
\lambda
\nabla
\mathcal L_{KD}^{p}
(\bar{\boldsymbol{\omega}}_{t}^{p})
=
\nabla F_p(\boldsymbol{\omega}_{t}^{p})
+
\boldsymbol{\delta}_{t}^{p}.
\label{eq:update_direction_decomposition}
\end{equation}
Therefore, by combining the result with \eqref{eq:combined_update}, we have
\begin{equation}
\boldsymbol{\omega}_{t+1}^{p}
-
\boldsymbol{\omega}_{t}^{p}
=
-\eta_t
\left(
\nabla F_p(\boldsymbol{\omega}_{t}^{p})
+
\boldsymbol{\delta}_{t}^{p}
\right).
\label{eq:update_difference}
\end{equation}
Substituting
\eqref{eq:update_difference}
into
\eqref{eq:smoothness_tound}, we have
\begin{align}
F_p(\boldsymbol{\omega}_{t+1}^{p})
\leq&
F_p(\boldsymbol{\omega}_{t}^{p})
-
{\eta_t}
\left\langle
\nabla F_p(\boldsymbol{\omega}_{t}^{p}),
\nabla F_p(\boldsymbol{\omega}_{t}^{p})
+
\boldsymbol{\delta}_{t}^{p}
\right\rangle
\nonumber\\
&
+
\frac{L\eta_t^2}{2}
\|
\nabla F_p(\boldsymbol{\omega}_{t}^{p})
+
\boldsymbol{\delta}_{t}^{p}
\|^2 .
\label{eq:substitution_update}
\end{align}

For the inner-product term of \eqref{eq:substitution_update},
we have
\begin{align}
&
-\eta_t
\left\langle
\nabla F_p(\boldsymbol{\omega}_{t}^{p}),
\nabla F_p(\boldsymbol{\omega}_{t}^{p})+\boldsymbol{\delta}_t^p
\right\rangle
\nonumber\\
=&
-\eta_t
\|
\nabla F_p(\boldsymbol{\omega}_{t}^{p})
\|^2
-
\eta_t
\left\langle
\nabla F_p(\boldsymbol{\omega}_{t}^{p}),
\boldsymbol e_{t-1}^{p}
-
\boldsymbol e_t^{p}
\right\rangle
\nonumber\\
&
-
\eta_t
\left\langle
\nabla F_p(\boldsymbol{\omega}_{t}^{p}),
\boldsymbol{\xi}_t^{p}
\right\rangle .
\end{align}
Taking expectation and according to Assumption~\ref{asm:Unbiased Stochastic Gradient},
the stochastic gradient noise term vanishes:
\begin{equation}
\mathbb E\left[
\left\langle
\nabla F_p(\boldsymbol{\omega}_{t}^{p}),
\boldsymbol{\xi}_t^p
\right\rangle\right]
=
0 .
\end{equation}
Therefore,
\begin{align}
&
\mathbb E
\left[
-\eta_t
\left\langle
\nabla F_p(\boldsymbol{\omega}_{t}^{p}),
\nabla F_p+\boldsymbol{\delta}_t^p
\right\rangle
\right]
\nonumber\\
=&
-\eta_t
\mathbb E
\left[
\|
\nabla F_p(\boldsymbol{\omega}_{t}^{p})
\|^2
\right]
-
\eta_t
\mathbb E
\left[
\left\langle
\nabla F_p(\boldsymbol{\omega}_{t}^{p}),
\boldsymbol e_{t-1}^{p}
-
\boldsymbol e_t^{p}
\right\rangle 
\right].
\label{eq:ef_inner_product}
\end{align}

For the second-order term of \eqref{eq:substitution_update},
according to
\[
\|a+b\|^2
\leq
2\|a\|^2+2\|b\|^2 ,
\]
we have
\begin{align}
&
\frac{L\eta_t^2}{2}
\|
\nabla F_p(\boldsymbol{\omega}_{t}^{p})+\boldsymbol{\delta}_t^p
\|^2
\leq
L\eta_t^2
\|\nabla F_p(\boldsymbol{\omega}_{t}^{p})\|^2
+
L\eta_t^2
\|\boldsymbol{\delta}_t^p\|^2 .
\label{eq:second_order_ef}
\end{align}
According to Lemma~\ref{lem:gradient_deviation},
the aggregated gradient deviation can be decomposed as
\eqref{eq:new_delta_definition}.
Using the inequality
\[
\|a+b+c\|^2
\leq
3\|a\|^2
+
3\|b\|^2
+
3\|c\|^2 ,
\]
we obtain
\begin{align}
\mathbb E
\left[ 
\|
\boldsymbol{\delta}_t^p
\|^2
\right]
\leq&
3
\mathbb E
\left[ 
\|
\boldsymbol e_{t-1}^{p}
\|^2
\right]
+
3
\mathbb E
\left[ 
\|
\boldsymbol e_t^{p}
\|^2
\right]
\nonumber\\
&
+
3
\mathbb E
\left[ 
\|
\boldsymbol\xi_t^{p}
\|^2\right] .
\label{eq:delta_second_moment}
\end{align}
By Assumption~\ref{asm:Bounded Stochastic Gradient Variance} and Assumption~\ref{asm:Bounded Error-Feedback Residual},
we have,
\begin{equation}
\mathbb E
\left[ 
\|
\boldsymbol{\delta}_t^p
\|^2
\right]
\leq
3(2\Gamma_{\max}+\sigma^2)
\triangleq
C_g .
\label{eq:delta_bound_ef}
\end{equation}
where $C_g$ denotes the bounded gradient deviation constant induced by heterogeneous compression and stochastic optimization.

Combining
\eqref{eq:ef_inner_product}, \eqref{eq:second_order_ef} and \eqref{eq:delta_bound_ef}, \eqref{eq:substitution_update} becomes
\begin{align}
&
\mathbb E
\left[ 
F_p(\boldsymbol{\omega}_{t+1}^{p})
 \right]
\nonumber\\
\leq&
\mathbb E
\left[
F_p(\boldsymbol{\omega}_{t}^{p})
\right]
-
(\eta_t-L\eta_t^2)
\mathbb E
\left[ 
\|\nabla F_p(\boldsymbol{\omega}_{t}^{p})\|^2
\right]
\nonumber\\
&
-
\eta_t
\mathbb E
\left[  
\left\langle
\nabla F_p(\boldsymbol{\omega}_{t}^{p}),
\boldsymbol e_{t-1}^{p}
-
\boldsymbol e_t^{p}
\right\rangle
 \right]
\nonumber\\
&
+
L\eta_t^2C_g .
\label{eq:new_descent}
\end{align}
The accumulated residual term can be bounded by the error-feedback telescoping property.
Specifically,
\begin{align}
\sum_{t=0}^{T-1}
(
\boldsymbol e_{t-1}^{p}
-
\boldsymbol e_t^{p}
)
=
\boldsymbol e_{-1}^{p}
-
\boldsymbol e_{T-1}^{p}.
\label{eq:residual_telescoping}
\end{align}

According to the standard error-feedback convergence property
\cite{karimireddy2019errorfeedback} and Assumption~\ref{asm:Bounded Error-Feedback Residual},
the residual accumulation term is bounded by a finite constant:
\begin{equation}
\left|
\sum_{t=0}^{T-1}
\eta_t
\mathbb E
\left[   
\langle
\nabla F_p(\boldsymbol{\omega}_{t}^{p}),
\boldsymbol e_{t-1}^{p}
-
\boldsymbol e_t^{p}
\rangle
\right]
\right|
\leq C_k ,
\end{equation}
where $C_k$ is a finite constant related to the error-feedback residual bound.

Summing
\eqref{eq:new_descent}
from $t=0$ to $T-1$ gives
\begin{align}
&
\sum_{t=0}^{T-1}
(\eta_t-L\eta_t^2)
\mathbb E
\left[
\|\nabla F_p(\boldsymbol{\omega}_t^p)\|^2
\right]
\nonumber\\
\leq&
F_p(\boldsymbol{\omega}_0^p)
-
F_p(\boldsymbol{\omega}_{T}^{p})
+
C_k
+
LT\eta_t^2C_g .
\end{align}
Since
$\eta_t\leq\frac1{4L}$,
\begin{equation}
{\eta_t}-L{\eta_t}^2
\geq
\frac{{3\eta_t}}{4},
\end{equation}
therefore,
\begin{align}
&
\frac{3\eta_t}{4}
\sum_{t=0}^{T-1}
\mathbb E
\left[ 
\|
\nabla F_p(\boldsymbol{\omega}_t^p)
\|^2
\right]
\nonumber\\
\leq&
F_p(\boldsymbol{\omega}_0^p)
-
F_p^\ast
+
C_k
+
LT\eta_t^2C_g .
\end{align}
Dividing both sides by
$\frac{{3\eta_t}T}{4}$,
we obtain
\begin{equation}
\frac1T
\sum_{t=0}^{T-1}
\mathbb E
\left[ 
\|
\nabla F_p(\boldsymbol{\omega}_t^p)
\|^2
\right]
\leq
\frac{
4
(
F_p(\boldsymbol{\omega}_0^p)-F_p^\ast+C_k
)
}
{3\eta_tT}
+
\frac{4L}{3}
\eta_t C_g .
\label{eq:final_convergence_bound}
\end{equation}
Therefore, by choosing
\[
\eta_t=O(1/\sqrt T),
\]
the first term and the second term both decrease as
$O(1/\sqrt T)$.
Therefore,
\begin{equation}
\frac1T
\sum_{t=0}^{T-1}
\mathbb E
\left[ 
\|
\nabla F_p(\boldsymbol{\omega}_t^p)
\|^2
\right]
=
O(1/\sqrt T).
\end{equation}

 \end{prf}

\subsection{Proof of Theorem \ref{thm:regret} }

\subsubsection{Proof of Lemma \ref{lem:ema_tracking}}

\begin{prf}
When strategy $s$ is selected, the EMA utility update follows \eqref{eq:q_update}.

Within the segment $\mathcal{T}_i$, by Assumption~\ref{asm:piecewise_stationary} and \eqref{eq:expected_reward}, the expected reward remains constant:
\begin{equation}
\mathbb{E}[\mathcal{G}_t^n(s)]
=
\mu_i^n(s).
\end{equation}
Subtracting $\mu_i^n(s)$ from both sides of \eqref{eq:q_update} yields
\begin{align}
Q_t^n(s)-\mu_i^n(s)
=&
(1-\eta_q)
(Q_{t-1}^n(s)-\mu_i^n(s))
\nonumber\\
&
+
\eta_q
(\mathcal{G}_t^n(s)-\mu_i^n(s)).
\end{align}
Taking expectation gives
\begin{align}
&
\mathbb{E}
[
Q_t^n(s)-\mu_i^n(s)
]
\nonumber\\
=&
(1-\eta_q)
\mathbb{E}
[
Q_{t-1}^n(s)-\mu_i^n(s)
].
\end{align}
Applying the above recursion for $m$ updates gives
\begin{equation}
\mathbb{E}
[
Q_t^n(s)-\mu_i^n(s)
]
=
(1-\eta_q)^m
\mathbb{E}
[
Q_{r_i}^n(s)-\mu_i^n(s)
].
\end{equation}
Taking the absolute value on both sides gives \eqref{eq:ema_tracking_error}.
\end{prf}

\subsubsection{Proof of Lemma \ref{lem:exploration_regret}}

\begin{prf}
During the exploration stage, ASCEND selects a compression strategy randomly with probability $\epsilon_t$, as defined in \eqref{eq:maxr}. Therefore, the expected regret caused by exploration at communication round $t$ is bounded by
\begin{equation}
\epsilon_t
\left(
\mu_t^n(s_t^{n,*})
-
\mu_t^n(s_t^n)
\right).
\end{equation}

Due to the reward clipping operation, the instantaneous regret is
bounded by
\begin{equation}
\mu_t^n(s_t^{n,*})
-
\mu_t^n(s_t^n)
\leq
\Delta_{\max}.
\end{equation}
Therefore, the cumulative exploration regret after $T$ communication
rounds satisfies
\begin{align}
\mathcal{R}_{\mathrm{explore}}^n(T)
&\leq
\Delta_{\max}
\sum_{t=1}^{T}
\epsilon_t
\nonumber\\
&\leq
c\Delta_{\max}
\sum_{t=1}^{T}
\frac1{\sqrt t}.
\end{align}
Using the integral inequality,
\begin{equation}
\sum_{t=1}^{T}
\frac1{\sqrt t}
\leq
1+
\int_1^T x^{-1/2}dx
\leq
2\sqrt T,
\end{equation}
we obtain \eqref{eq:exploration_regret_bound}.
\end{prf}

\subsubsection{Proof of Lemma \ref{lem:exploitation_regret}}

\begin{prf}
During the exploitation phase, ASCEND selects the compression strategy
according to the current utility estimation:
\begin{equation}
s_t^n
=
\arg\max_{s\in\mathcal S}
Q_t^n(s).
\end{equation}
Define the utility estimation error as
\begin{equation}
v_t^n(s)
=
Q_t^n(s)-\mu_t^n(s).
\label{eq:utility_estimation_error}
\end{equation}
Since $s_t^n$ maximizes the estimated utility,
\begin{equation}
Q_t^n(s_t^n)
\geq
Q_t^n(s_t^{n,*}).
\end{equation}
Substituting
$Q_t^n(s)=\mu_t^n(s)+v_t^n(s)$ gives
\begin{align}
&
\mu_t^n(s_t^n)
+
v_t^n(s_t^n)
\geq
\mu_t^n(s_t^{n,*})
+
v_t^n(s_t^{n,*}).
\end{align}
Therefore,
\begin{align}
&
\mu_t^n(s_t^{n,*})
-
\mu_t^n(s_t^n)
\leq
v_t^n(s_t^n)
-
v_t^n(s_t^{n,*}).
\end{align}
The instantaneous exploitation regret is
\begin{align}
\mu_t^n(s_t^{n,*})
-
\mu_t^n(s_t^n)
&\leq
|v_t^n(s_t^n)|
+
|v_t^n(s_t^{n,*})|
\nonumber\\
&\leq
2
\max_{s\in\mathcal S}
|v_t^n(s)|.
\label{eq:instant_exploit_regret}
\end{align}

Next, we analyze the cumulative estimation error within a stationary segment $\mathcal T_i$. According to Assumption~\ref{asm:piecewise_stationary}, the expected reward remains unchanged within each segment.
The utility estimation error can be decomposed into the EMA bias and the stochastic fluctuation:
\begin{equation}
v_t^n(s)
=
\underbrace{
Q_t^n(s)-\mathbb{E}[Q_t^n(s)]
}_{\text{stochastic fluctuation}}
+
\underbrace{
\mathbb{E}[Q_t^n(s)]-\mu_i^n(s)
}_{\text{EMA bias}} .
\label{eq:error_decomposition}
\end{equation}

According to Assumption~\ref{asm:reward_concentration} and
Lemma~\ref{lem:ema_tracking}, after $H_{i,t}^n(s)$ updates of strategy
$s$, its utility estimation error satisfies
\begin{equation}
\mathbb{E}
\left[
|v_t^n(s)|
\right]
\leq
\frac{C_1}
{\sqrt{H_{i,t}^n(s)\vee 1}}
+
C_2(1-\eta_q)^{H_{i,t}^n(s)},
\label{eq:73}
\end{equation}
where $a\vee b=\max\{a,b\}$, and $C_1$ and $C_2$ are constants determined by the reward variance, EMA coefficient, and initialization error.

Combining \eqref{eq:73} with
\eqref{eq:instant_exploit_regret}, the expected instantaneous
exploitation regret at round $t\in\mathcal T_i$ satisfies
\begin{align}
&\mathbb{E}\!\left[
\mu_t^n(s_t^{n,*})
-
\mu_t^n(s_t^n)
\right]
\nonumber\\
&\leq
C_1
\left[
\frac{1}
{\sqrt{H_{i,t}^n(s_t^n)\vee1}}
+
\frac{1}
{\sqrt{H_{i,t}^n(s_t^{n,*})\vee1}}
\right]
\nonumber\\
&\quad+
C_2
\left[
(1-\eta_q)^{H_{i,t}^n(s_t^n)}
+
(1-\eta_q)^{H_{i,t}^n(s_t^{n,*})}
\right].
\label{eq:instant_exploit_visit}
\end{align}
Under Assumption~\ref{asm:sufficient_sampling}, for every
$s\in\mathcal S$, \eqref{eq:instant_exploit_visit} yields
\begin{align}
&\mathbb{E}\!\left[
\mu_t^n(s_t^{n,*})
-
\mu_t^n(s_t^n)
\right]
\nonumber\\
&\leq
\frac{2C_1}
{\sqrt{\pi_{\min}(t-r_i+1)}}
+
2C_2
(1-\eta_q)^{\pi_{\min}(t-r_i+1)}.
\label{eq:instant_exploit_round}
\end{align}
Summing \eqref{eq:instant_exploit_round} over
$t\in\mathcal T_i$, we obtain
\begin{align}
\mathcal R_{\mathrm{exploit},i}^n
\nonumber
&\leq
\frac{4C_1}{\sqrt{\pi_{\min}}}
\sqrt{|\mathcal T_i|}
+
\frac{2C_2}
{1-(1-\eta_q)^{\pi_{\min}}}.
\label{eq:segment_exploit_regret}
\end{align}
Let
\begin{equation}
C_e
=
\frac{4C_1}{\sqrt{\pi_{\min}}}
+
\frac{2C_2}
{1-(1-\eta_q)^{\pi_{\min}}},
\end{equation}
where $C_e$ is independent of $|\mathcal T_i|$.
Since $|\mathcal T_i|\geq1$, we have
\begin{equation}
\mathcal R_{\mathrm{exploit},i}^n
\leq
C_e\sqrt{|\mathcal T_i|}.
\end{equation}
Finally, summing over all stationary segments gives
\begin{align}
\mathcal R_{\mathrm{exploit}}^n(T)
&=
\sum_{i=1}^{M}
\mathcal R_{\mathrm{exploit},i}^n
\nonumber\\
&\leq
C_e
\sum_{i=1}^{M}
\sqrt{|\mathcal T_i|}
\nonumber\\
&\leq
C_e
\sqrt{
M\sum_{i=1}^{M}|\mathcal T_i|
}
\nonumber\\
&=
C_e\sqrt{MT},
\end{align}
where the second inequality follows from the Cauchy--Schwarz
inequality.
\end{prf}

\subsubsection{Proof of Theorem \ref{thm:regret}}
\begin{prf}
We first analyze the regret incurred during the warm-up phase.

During warm-up, each compression strategy is selected exactly $h$ times to initialize the utility estimation. Therefore, the number of warm-up rounds is $h|\mathcal S|$.

Since the reward is clipped into $\left[\mathcal{G}_{\min},\mathcal{G}_{\max}\right]$, the maximum instantaneous regret is bounded by \eqref{eq:Gmax,min}.
Therefore, the warm-up regret is bounded by
\begin{equation}
\mathcal{R}_{\mathrm{warm}}^n
\leq
h|\mathcal S|\Delta_{\max}.
\label{eq:warmup_regret_bound}
\end{equation}

Combining
\eqref{eq:warmup_regret_bound}, Lemmas \ref{lem:exploration_regret} and \ref{lem:exploitation_regret},
we obtain
\begin{align}
\mathcal{R}_{\mathrm{ASCEND}}^n(T)&
=
\mathcal{R}_{\mathrm{warm}}^n
+
\mathcal{R}_{\mathrm{explore}}^n(T)
+
\mathcal{R}_{\mathrm{exploit}}^n(T),\\
&\leq
h|\mathcal S|\Delta_{\max}
+
2c\Delta_{\max}\sqrt T 
+
C_e\sqrt{MT}.
\end{align}
Therefore,
\begin{equation}
\mathcal{R}_{\mathrm{ASCEND}}^n(T)
=
O(\sqrt{MT}).
\end{equation}
\end{prf}



\vfill

\end{document}